\documentclass[12pt]{amsart}
\usepackage{amsmath}
\usepackage{amssymb, mathtools, mathrsfs}
\usepackage[
backend=biber,
style=alphabetic,
giveninits=true,
maxcitenames=6,
mincitenames=6,
maxbibnames=10,
sorting=nyt
]{biblatex}% Loads biblatex
\usepackage{graphicx} % Required for inserting images
\usepackage{amssymb}
\usepackage{amsfonts}
\usepackage{braket}
\usepackage{float}
\usepackage{ytableau}
\usepackage[draft]{}
\usepackage{hyperref}
\usepackage{xcolor}
\usepackage{tikz} 
\usepackage{comment}
\usetikzlibrary{positioning}
\newcommand{\HH}{\widehat{H}}
\newcommand{\End}{\operatorname{End}}

\newcommand{\id}{\operatorname{id}}

\newcommand{\Tr}{\operatorname{Tr}}

\newcommand{\p}{\widehat{p}}

\newcommand{\G}{\Gamma}
\newcommand{\Lax}{\widehat{L}}
\newcounter{Todo}

\numberwithin{equation}{section}
\newtheorem{theorem}{Theorem}[section]

\newtheorem{prop}[theorem]{Proposition}

\newtheorem{remark}[theorem]{Remark}

\title{Integrability of the deformed Toda systems}
\author{Mikhail Vasilev}
\address{
Mikhail Vasilev, School of Mathematics and Statistics,
University of Glasgow,
Glasgow G12 8QQ, United Kingdom
}
\email{Mikhail.Vasilev@glasgow.ac.uk}
\date{}
\begin{document}
%%%%%%%%%%%%%%%%%%%%%%%%%
\begin{abstract}
     In 2020 M. Mucciconi and L. Petrov introduced a long-range deformation of the quantum open non-relativistic Toda system. We prove the integrability of the deformed Toda system by constructing a $2 \times 2$ Lax operator, which produces the commutative family of differential operators containing the Hamiltonian of the deformed Toda system. Moreover, we show that the same integrable deformation exists on both classical and quantum levels and can be applied to both non-relativistic and relativistic Toda systems. For the open non-relativistic deformed Toda systems we also present an $n \times n$ Lax matrix and prove that it produces the same family of Hamiltonians.  We also show how to obtain the van Diejen-type deformed Toda system. Lastly, we show that on the quantum level the algebraic Bethe ansatz technique can be applied to the deformed Toda system.  
\end{abstract}
%%%%%%%%%%%%%%%%%%%%%%%%%%%%%%%%%%%%%%%%
\maketitle
\tableofcontents
\section{Introduction}
Toda systems \cite{toda1967vibration, RS, R} are well-known beautiful examples of finite-dimensional integrable models of classical and quantum mechanics, which have deep connections with many different areas of mathematics and physics such as representation theory \cite{E, K1}, supersymmetric gauge theories \cite{GKMMM, G1}, quantum cohomology \cite{GK}, matrix models \cite{GMMMO}, cluster algebras \cite{Gekhtman2011Generalized}, and many others. Integrable Toda chains were introduced by M. Toda in \cite{toda1967vibration,T}.
Since then, they have become a prominent example of many-body integrable systems. 
Their generalisations associated with arbitrary root systems have also been defined and well-studied \cite{OP, perelomov1990integrable}.

The simplest open Toda chain Hamiltonian has the form
\begin{equation}\label{H1openToda}
    H = \sum\limits_{i = 1}^n\frac{p_i^2}{2} - \sum\limits_{i = 1}^{n - 1} \frac{x_i}{x_{i + 1}},
\end{equation}
where $p_i$ are either the classical momenta or their quantum version $\p_i = x_i \partial_{x_i}$. One of the most important properties of the Hamiltonian \eqref{H1openToda} is locality, i.e., in the potential term only the nearest neighbours interact. Several years ago an interesting deformation of the quantum Hamiltonian \eqref{H1openToda}
\begin{equation}\label{MPdeform}
    \widehat{\mathcal{H}} = \frac{1}{2}\sum\limits_{i = 1}^n (x_i \partial_{x_i})^2 - \sum\limits_{i < j}^n s^{-2(j-i)} \frac{x_i}{x_j}(s + x_i \partial_{x_i})(s - x_j \partial_{x_j})
\end{equation}
was introduced by M. Mucciconi and L. Petrov in \cite{mucciconi2022spin} during their study of the so-called spin Whittaker functions \cite{borodin2021spin}, \cite{mucciconi2022spin}. The eigenfunctions of the operator \eqref{MPdeform} are exactly spin Whittaker functions. The operator \eqref{MPdeform} is a one-parameter long-range deformation of the open Toda system \eqref{H1openToda}, which breaks the locality property of the initial system. The original Toda system can be recovered from the deformed operator \eqref{MPdeform} by sending the deformation parameter $s \to \infty$ to infinity, restoring the locality. The derivation of the operator \eqref{MPdeform} in \cite{mucciconi2022spin} is quite a complicated procedure involving the scaling limit, which makes it difficult to have control over the resulting operator, and only the operator \eqref{MPdeform} was presented in addition to the total momentum operator, leaving open the question of integrability of the operator \eqref{MPdeform}.

In this text, we prove the integrability of the deformed Toda system \eqref{MPdeform} using standard methods of integrability such as Lax operators, classical and quantum $R$-matrix structures, and reflection equations \cite{Cherednik1984Factorizing}, \cite{sklyanin1988boundary}. We also explain that the one-parametric deformation \eqref{MPdeform} can be lifted to a multiparametric deformation with the deformation parameters attached to every particle in the system. We also present closed non-relativistic, open, and closed relativistic \cite{R} as well as van Diejen-type integrable deformed Toda systems \cite{sklyanin1988boundary} on both classical and quantum levels. In every situation, it is possible to consider a multiparametric deformation of a Toda system; however, we mostly restrict ourselves to a one-parametric deformation to simplify the resulting formulae. We also emphasise that the technique of the algebraic Bethe ansatz \cite{Sklyanin1979Quantum}, \cite{Takhtadzhan1979}, \cite{Sklyanin1982} can be applied to the deformed Toda systems, which cannot be done for the non-deformed systems due to non-existence of the pseudo-vacuum for the corresponding monodromy matrix. We derive the Bethe ansatz equations for the deformed Toda system and the eigenvalues of the transfer matrix containing the deformed Toda system. 

Let us also discuss the place of these deformed Toda systems in the framework of integrable many-body systems. Toda systems are one of the most well-studied integrable systems together with the Calogero--Moser--Sutherland \cite{calogero1971solution,moser1976three, sutherland1971exact} and Ruijsenaars-Schneider families \cite{RS, ruijsenaars1987complete}, we also refer the reader to \cite{Hallnas:2023ozo} for a historical and pedagogical overview. It is well-known that Toda systems can be obtained from the Calogero--Moser--Sutherland and Ruijsenaars--Schneider systems via a scaling limit \cite{inozemtsev1989finite,E}, to which we refer as Etingof--Inozemtsev limit. In Figure 1 we show various connections to other integrable systems that open Toda systems have. It is natural to ask if the similar deformations of the Calogero--Moser--Sutherland and Ruijsenars--Schneider systems exist. It seems quite unlikely since one of the prominent features on these systems is their $W$-invariance, where $W$ is a corresponding Weyl group. The same connections,which are reflected in Figure 2, also hold for the (deformed)closed Toda systems.
\begin{center}
\begin{figure}
\begin{tikzpicture}
[scale=0.92, node distance=3cm and 2cm, every node/.style={align=center}, transform shape] \node[draw] (TR) {Trig. Ruijsenaars}; \node[draw] (RT) [right=of TR] {Relativistic Toda}; \node[draw] (DRT) [right=of RT] {Deformed relativistic\\Toda}; \node[draw] (CMS) [below=2.5cm of TR] {Calogero--Moser--Sutherland}; \node[draw] (T) [below=2.5cm of RT] {Toda}; \node[draw] (DT) [below=2.3cm of DRT] {Deformed Toda}; \draw[->] (TR) -- node[above] {EI limit} (RT); \draw[<-] (RT) -- node[above] {$s \to \infty$} (DRT); \draw[->] (CMS) -- node[above] {EI limit} (T); \draw[<-] (T) -- node[above] {$s \to \infty$} (DT); \draw[->] (TR) -- (CMS); \draw[->] (RT) -- (T); \draw[->] (DRT) -- (DT); 
\end{tikzpicture} 
\caption{The network of integrable many-body systems. Both Toda systems are obtained from the RS and CMS systems via Etingof--Inozemtsev limit. Sending the deformation parameter in the deformed Toda systems to infinity we get back to the ordinary Toda systems. Vertical arrows correspond to the non-relativistic limit.}
\end{figure}
\end{center} 
%%%%%%%%%%%%%%%%%%%%%%%%%%%%%%%
\begin{figure}
\begin{tikzpicture}
[scale=0.92, node distance=3cm and 2cm, every node/.style={align=center}, transform shape] \node[draw] (TR) {Elliptic Ruijsenaars}; \node[draw] (RT) [right=of TR] {Relativistic closed Toda}; \node[draw] (DRT) [right=of RT] {Deformed relativistic\\closed Toda}; \node[draw] (CMS) [below=2.5cm of TR] {Elliptic Calogero--Moser}; \node[draw] (T) [below=2.5cm of RT] {Closed Toda}; \node[draw] (DT) [below=2.3cm of DRT] {Deformed closed Toda}; \draw[->] (TR) -- node[above] {EI limit} (RT); \draw[<-] (RT) -- node[above] {$s \to \infty$} (DRT); \draw[->] (CMS) -- node[above] {EI limit} (T); \draw[<-] (T) -- node[above] {$s \to \infty$} (DT); \draw[->] (TR) -- (CMS); \draw[->] (RT) -- (T); \draw[->] (DRT) -- (DT); 
\end{tikzpicture} 
\caption{The network of integrable many-body systems involving (affine) closed Toda systems.}
\end{figure}

The structure of the paper is as follows. In Section 2 we recall the basic properties of the Toda systems and provide all the ingredients needed for the proof of the integrability of the corresponding systems. In Section 3 we introduce deformed Lax operators which  satisfy the classical and quantum $R$-matrix structures. We show that the transfer matrix, which is built with the help of the deformed Lax operators, contains the deformed Toda operator \eqref{MPdeform} as well as other classical/quantum integrals of motion, thus proving the integrability of the deformed Toda systems. In both classical and quantum cases of the open deformed Toda system, we also present an $n\times n$ Lax matrix, whose characteristic polynomial coincides with the transfer matrix; thus, we get a dual description of the integrability in this case. In Section 4 we study similar deformation of the Lax operators in the relativistic case and explain that similar deformation could be lifted therein.  

\bigskip
\noindent{\bf Acknowledgments.} The author thanks Christian Korff for his thoughtful supervision, guidance, and encouragement, which have been invaluable to the development of this work. The author thanks Martin Vrabec for carefully proofreading the draft of the paper, which greatly helped to improve the presentation of the material. The author also thanks Misha Feigin for fruitful discussions. The author acknowledges the financial support of EPSRC (Grant Number EP/V053728/1). The author is grateful to the University of Glasgow for financial support in the form of a Postgraduate Scholarship of the College of Science and Engineering.
\section{Toda systems}
This section is devoted to a brief overview of the classical and quantum integrability of the Toda systems. We recall the main ingredients in the proof of the integrability of the Toda systems, such as classical and quantum $R$-matrix structures, Lax operators, monodromy matrix, etc. For a pedagogical introduction to the integrability of Toda systems, we refer the reader to \cite{babelon2003introduction}.
%%%%%%%%%%%%%%%%%%%%%%%%%
\subsection{Classical Toda systems}
Consider the following symplectic manifold 
\begin{equation}\label{manifold2}
    \mathcal{M} = (\mathbb{R}^{\times})^n \times \mathbb{R}^n  = \{x_1, \ldots, x_n, p_1, \ldots, p_n \; | \; x_i \neq 0 \}
\end{equation}
with the Poisson brackets given by
\[
    \{p_i , p_j\} = \{x_i , x_j\} = 0, \quad \{p_i,x_j\} =  \delta_{ij} x_i.
\]
Introduce the following local Lax matrices
\begin{equation}\label{LaxopennonrelNOT}
    L_i(u) = \begin{pmatrix}
    u - p_i & x_i^{-1} \\
    x_i & 0
    \end{pmatrix}
\end{equation}
and the classical $r$-matrix
\begin{equation}\label{r-matrixintro}
r_{12}(u) = \frac{P_{12}}{u} \in \End(\mathbb{R}^2 \otimes \mathbb{R}^2)[u^{-1}], \quad P_{12} = \sum\limits_{i,j=1}^2 E_{ij} \otimes E_{ji},
\end{equation}
where $E_{ij}$ are standard matrix units. The classical $r$-matrix satisfies the classical Yang--Baxter equation
\[
    [r_{12}(u_1 - u_2), r_{13}(u_1 - u_3)] + [r_{12}(u_1 - u_2), r_{23}(u_2 - u_3)] + [r_{13}(u_1 - u_3), r_{23}(u_2 - u_3)] = 0,
\]
where 
\begin{equation}\label{rmatrixdef}
    r_{12}(u) =  \frac{1}{u} P_{12}\otimes 1, \quad r_{13}(u) = \frac{1}{u}\sum\limits_{i,j=1}^2 E_{ij} \otimes 1 \otimes E_{ji}, \quad r_{23}(u) = \frac{1}{u} \sum\limits_{i,j=1}^2 1 \otimes E_{ij} \otimes E_{ji}.  
\end{equation}
The following proposition describes the Poisson brackets between the matrix elements of the Lax matrix \eqref{LaxopennonrelNOT} in terms of the classical $r$-matrix bracket. 
\begin{prop}
The Lax operator \eqref{LaxopennonrelNOT} satisfies the following classical quadratic $r$-matrix structure
\begin{equation}\label{quadrrat}
\{L^{(1)}_i(u_1), L^{(2)}_i(u_2)\} = [r_{12}(u_1 - u_2), L^{(1)}_i(u_1) L^{(2)}_i(u_2)],
\end{equation}
where we use the notations $L^{(1)}_i(u_1) = L_i(u_1) \otimes \id$ and $L^{(2)}_i(u_2) = \id \otimes L_i(u_2)$. The left hand side of \eqref{quadrrat} should be understood as
\[
\{L^{(1)}_i(u_1), L^{(2)}_i(u_2)\} = \sum\limits_{a,b,c,d=1}^2 \{L_i(u_1)_{ab}, L_i(u_2)_{cd} \} E_{ab} \otimes E_{cd}.
\]
\end{prop}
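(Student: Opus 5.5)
The plan is a direct verification: both sides of \eqref{quadrrat} are $4\times 4$ matrices (elements of $\End(\mathbb{R}^2\otimes\mathbb{R}^2)$ with function entries), so it suffices to compute each side explicitly and compare. First I would reduce the right hand side to a $u$-independent expression. Using $P_{12}(A\otimes B)P_{12} = B\otimes A$, we get $L^{(1)}_i(u_1)L^{(2)}_i(u_2)P_{12} = P_{12}\,\bigl(L_i(u_2)\otimes L_i(u_1)\bigr)$, hence
\[
[r_{12}(u_1-u_2), L^{(1)}_i(u_1)L^{(2)}_i(u_2)] = \frac{1}{u_1-u_2}\,P_{12}\bigl(L_i(u_1)\otimes L_i(u_2) - L_i(u_2)\otimes L_i(u_1)\bigr).
\]
Since $L_i(u_1) = L_i(u_2) + (u_1-u_2)E_{11}$, the bracketed difference equals $(u_1-u_2)\bigl(E_{11}\otimes L_i(u_2) - L_i(u_2)\otimes E_{11}\bigr)$. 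The pole therefore cancels. Moreover, the diagonal part $(u_2-p_i)E_{11}$ of $L_i(u_2)$ contributes $E_{11}\otimes E_{11}-E_{11}\otimes E_{11}=0$. So the right hand side reduces to
\[
P_{12}\bigl(E_{11}\otimes (x_i^{-1}E_{12}+x_iE_{21}) - (x_i^{-1}E_{12}+x_iE_{21})\otimes E_{11}\bigr).
\]
I would then expand this with $P_{12}=\sum E_{kl}\otimes E_{lk}$ and $E_{kl}E_{mn}=\delta_{lm}E_{kn}$, obtaining four tensor monomials with coefficients $\pm x_i^{\pm1}$.

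For the left hand side, I would list the brackets of the entries $L_{11}=u-p_i$, $L_{12}=x_i^{-1}$, $L_{21}=x_i$, $L_{22}=0$. From $\{p_i,x_i\}=x_i$ and the Leibniz rule, $\{p_i,x_i^{-1}\}=-x_i^{-1}$. Thus the only nonzero brackets are
\[
\{L_{11},L_{12}\}=x_i^{-1},\qquad \{L_{11},L_{21}\}=-x_i,
\]
together with their antisymmetric partners $\{L_{12},L_{11}\}=-x_i^{-1}$ and $\{L_{21},L_{11}\}=x_i$. All brackets among functions of $x_i$ alone vanish, and the constant spectral parameters drop out, so the left hand side is also $u$-independent. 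It equals
\[
x_i^{-1}(E_{11}\otimes E_{12}-E_{12}\otimes E_{11}) - x_i(E_{11}\otimes E_{21}-E_{21}\otimes E_{11}).
\]

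The final step is to match the four monomials from the $P_{12}$-expansion with these four terms. The only real obstacle is bookkeeping: the sign and ordering conventions. Specifically, I must check on which side $P_{12}$ multiplies after the commutator, and that the paper's convention $\{p,x\}=x$ (not $-x$) gives the stated sign of $r$. I would verify one coefficient carefully, say that of $E_{11}\otimes E_{12}$, and then use the symmetry $x_i\leftrightarrow x_i^{-1}$, $E_{12}\leftrightarrow E_{21}$ to conclude for the rest. If a global sign mismatch appeared, it would indicate the convention is $L^{(1)}L^{(2)}P$ versus $P L^{(1)}L^{(2)}$ ordering, which I would recheck against the identity above.
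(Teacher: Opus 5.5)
Your proposal is correct and is the same direct verification the paper relies on: the paper states this proposition as standard, and proves the deformed analogue by computing both sides of \eqref{quadrrat2}. Your reduction via $L_i(u_1)-L_i(u_2)=(u_1-u_2)E_{11}$ is a clean way to organise it, and the final matching closes with no sign issue. Indeed $P_{12}(E_{11}\otimes E_{12})=E_{11}\otimes E_{12}$, $P_{12}(E_{11}\otimes E_{21})=E_{21}\otimes E_{11}$, $P_{12}(E_{12}\otimes E_{11})=E_{12}\otimes E_{11}$ and $P_{12}(E_{21}\otimes E_{11})=E_{11}\otimes E_{21}$, so the right hand side is exactly $x_i^{-1}(E_{11}\otimes E_{12}-E_{12}\otimes E_{11})-x_i(E_{11}\otimes E_{21}-E_{21}\otimes E_{11})$, which is your left hand side.
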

\noindent The classical monodromy matrix is defined as a "train" of local Lax operators  
\begin{equation}\label{Monodromyintro}
T(u) = L_n(u) \cdots L_1(u) = \begin{pmatrix}
    T_{11}(u) & T_{12}(u) \\
    T_{21}(u) & T_{22}(u)
\end{pmatrix}.
\end{equation}
We define the transfer matrix as a weighted trace of the monodromy matrix \eqref{Monodromyintro}
\begin{equation}\label{Hamopennonrelintro}
t(u) = T_{11}(u) + Q T_{22}(u) = \sum\limits_{i = 0}^n (-1)^i H_i u^{n - i},
\end{equation}
where $Q \in \mathbb{R}$ is any real number. The following proposition is standard in the theory of integrable systems.
\begin{prop}
The transfer matrix \eqref{Hamopennonrelintro} is a generating function of the Poisson commutative family of functions
\[
\{t(u_1), t(u_2)\} = 0, \quad \{H_i, H_j \} = 0. 
\]
\end{prop}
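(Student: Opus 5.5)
The plan is to derive the commutativity from the quadratic $r$-matrix structure \eqref{quadrrat} in three steps: lift it to the monodromy matrix, take a twisted trace, and kill the resulting commutator by a symmetry of the twist.

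\textbf{Step 1: the monodromy matrix satisfies the same bracket.} Since $L_i$ and $L_j$ depend on disjoint sets of variables, their entries Poisson commute for $i\neq j$. I will prove by induction on $n$ that
\[
\{T^{(1)}(u_1), T^{(2)}(u_2)\} = [r_{12}(u_1-u_2), T^{(1)}(u_1)T^{(2)}(u_2)].
\]
For the inductive step write $T = L_n T'$. The Leibniz rule applied entrywise gives
\[
\{T^{(1)},T^{(2)}\} = L_n^{(1)}L_n^{(2)}\{T'^{(1)},T'^{(2)}\} + \{L_n^{(1)},L_n^{(2)}\}T'^{(1)}T'^{(2)}.
\]
Here we use that $L^{(1)}$ and $T'^{(2)}$ act in different tensor factors, so they commute as matrices. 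Substituting the induction hypothesis and \eqref{quadrrat}, the cross terms $\pm L_n^{(1)}L_n^{(2)}\, r\, T'^{(1)}T'^{(2)}$ cancel. What remains is
\[
r\,L_n^{(1)}L_n^{(2)}T'^{(1)}T'^{(2)} - L_n^{(1)}L_n^{(2)}T'^{(1)}T'^{(2)}\,r = [r, T^{(1)}T^{(2)}].
\]

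\textbf{Step 2: pass to the twisted trace.} Set $K=\mathrm{diag}(1,Q)$, so that $t(u)=\Tr(KT(u))$. Then
\[
\{t(u_1),t(u_2)\} = \Tr_{12}\bigl(K^{(1)}K^{(2)}\{T^{(1)},T^{(2)}\}\bigr) = \Tr_{12}\bigl(K^{(1)}K^{(2)}[r_{12}, T^{(1)}T^{(2)}]\bigr).
\]

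\textbf{Step 3: use that the twist commutes with $r$.} Since $r_{12}(u)=P_{12}/u$ and $P_{12}(A\otimes B)P_{12}=B\otimes A$, we have $[P_{12}, K\otimes K]=0$, hence $K^{(1)}K^{(2)}$ commutes with $r_{12}$. By cyclicity of the trace,
\[
\Tr\bigl(KK\,rTT - KK\,TT\,r\bigr) = \Tr\bigl(r\,KK\,TT\bigr) - \Tr\bigl(r\,KK\,TT\bigr) = 0.
\]
This gives $\{t(u_1),t(u_2)\}=0$ for all $u_1,u_2$.

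\textbf{Step 4: coefficients.} Expanding $t(u_1)$ and $t(u_2)$ in powers of $u_1$ and $u_2$ and comparing coefficients of $u_1^{n-i}u_2^{n-j}$ yields $\{H_i,H_j\}=0$. Here $t$ is a polynomial in $u$ of degree $n$: its leading term $u^n$ comes from $T_{11}$, while $T_{22}$ has lower degree.

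The only mildly delicate point is the bookkeeping in Step 1: keeping track of tensor-factor positions and orders, and confirming that the Leibniz rule applies entrywise to matrix products with the stated ordering. The pole at $u_1=u_2$ is harmless. The bracket is polynomial, so the identity holds for $u_1\neq u_2$ and then extends by continuity.
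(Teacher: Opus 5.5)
Your proposal is correct and follows essentially the same route as the paper: lift the quadratic $r$-matrix bracket to the monodromy matrix, take the $\mathrm{diag}(1,Q)$-twisted trace, and use $[P_{12},V\otimes V]=0$ so that the trace of the commutator vanishes. The differences are cosmetic. The paper absorbs the twist into $\widetilde{T}=VT$, noting that the constant matrix $V$ trivially satisfies the bracket, whereas you keep $K$ outside the trace and invoke $[r_{12},K\otimes K]=0$ directly. You also write out the induction that the paper only asserts.
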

\begin{proof}
    The proof can be found in most of the literature on the theory of integrable systems; we include it for completeness. Introduce an additional diagonal matrix 
    \[
    V = \begin{pmatrix}
        1 & 0 \\
        0 & Q
    \end{pmatrix},
    \]
    which satisfies the classical $r$-matrix bracket \eqref{quadrrat}. It follows from  the structure of the $r$-matrix brackets and the construction of the monodromy matrix \eqref{Monodromyintro} that the matrix
    \[
    \widetilde{T}(u) = V L_n(u) \cdots L_1(u)
    \]
    also satisfies the classical $r$-matrix bracket. Observe that $t(u) = \Tr \widetilde{T}(u)$. Then we have 
    \[
    \{t(u_1), t(u_2) \} = \Tr_{12}\{\widetilde{T}^{(1)}(u_1), \widetilde{T}^{(2)}(u_2) \} = \Tr_{12}[r_{12}(u_1 - u_2),\widetilde{T}^{(1)}(u_1) \widetilde{T}^{(2)}(u_2)] = 0,
    \]
    where $\Tr_{12}$ denotes the trace over the entire vector space $\mathbb{R}^4 \simeq \mathbb{R}^2 \otimes \mathbb{R}^2$. The last equality holds since the trace of any commutator vanishes.
\end{proof}
\noindent We can compute the first few classical Hamiltonians explicitly 
\begin{equation}\label{Hamclass1intro}
H_1 = \sum\limits_{i = 1}^n p_i ,\quad H_2 = \sum\limits_{i<j}^n p_i p_j + \sum\limits_{i = 1}^{n - 1}\frac{x_i}{x_{i + 1}} + Q \frac{x_n}{x_1},
\end{equation}
which reduces to the open Toda system Hamiltonian \eqref{H1openToda} for $Q=0$ if we consider the Hamiltonian
\[
H = \frac{H_1^2}{2} - H_2.
\]
Thus, we have constructed a Poisson commutative family containing $n$ independent functions, which includes the open non-relativistic Toda Hamiltonian, which proves the integrability of the Toda systems. In the presence of the parameter $Q$, we obtain the so-called closed or affine Toda system. 
%%%%%%%%%%%%%%%%%%%%%%%%%%%%%%%%%%%%%%%%%
\subsection{Quantum Toda systems}
We work with the canonical quantisation of the sympletic manifold \eqref{manifold2}, i.e. let us introduce the coordinate and momentum operators
\begin{equation}\label{operators2}
    \widehat{x}_i:f \mapsto x_i f, \quad \widehat{p}_i:f \mapsto x_i \partial_{x_i}f
\end{equation}
acting on the appropriate space of functions in $n$ variables. The operators \eqref{operators2} satisfy the commutation relations\footnote{In what follows, we will usually omit the hat sign in the notation for the coordinate operators $\widehat{x}_i$.}
\[
[\p_i,\widehat{x}_j] = \delta_{ij}\widehat{x}_i.
\]
Introduce the following local quantum Lax operator
\begin{equation}\label{quanLaxintro}
    \widehat{L}_i(u) = \begin{pmatrix}
    u - x_i \partial_{x_i} & x_i^{-1} \\
    x_i & 0
    \end{pmatrix}
\end{equation}
and the following quantum version of the classical $r$-matrix \eqref{r-matrixintro}
\begin{equation*}
    R_{12}(u) = \id - \frac{P_{12}}{u} \in \End(\mathbb{R}^2 \otimes \mathbb{R}^2)[u^{-1}],
\end{equation*}
which satisfies the quantum Yang--Baxter equation
\begin{equation}\label{QYBE}
R_{12}(u_1 - u_2) R_{13}(u_1 - u_3) R_{23}(u_2 - u_3) = R_{23}(u_2 - u_3) R_{13}(u_1 - u_3) R_{12}(u_1 - u_2).\footnote{The operators $R_{ij}(u)$ are defined in the same way as $r_{ij}(u)$ in \eqref{rmatrixdef}.}
\end{equation}
The next proposition is a quantum analogue of the classical $r$-matrix structure \eqref{quadrrat}, which describes the commutation relations between the matrix elements of the local Lax operators.
\begin{prop}
    The Lax operators \eqref{quanLaxintro} satisfy the $RLL$ exchange relation
    \begin{equation}\label{RLLintro}
        R_{12}(u_1 - u_2)\widehat{L}^{(1)}_i(u_1) \Lax^{(2)}_i(u_2) = \Lax_i^{(2)}(u_2) \Lax^{(1)}_i(u_1) R_{12}(u_1 - u_2).
    \end{equation}
\end{prop}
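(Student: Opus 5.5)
The plan is to verify \eqref{RLLintro} directly: reduce it to a finite list of scalar commutator identities between matrix elements of $\Lax_i$, then check each one using only the canonical relation $[\p_i, x_i^{\pm 1}] = \pm x_i^{\pm 1}$. Write $u = u_1 - u_2$ and $R_{12}(u) = \id - P_{12}/u$, and multiply \eqref{RLLintro} by $u$. The relation becomes
\[
u\bigl(\Lax^{(1)}_i(u_1)\Lax^{(2)}_i(u_2) - \Lax^{(2)}_i(u_2)\Lax^{(1)}_i(u_1)\bigr) = P_{12}\Lax^{(1)}_i(u_1)\Lax^{(2)}_i(u_2) - \Lax^{(2)}_i(u_2)\Lax^{(1)}_i(u_1)P_{12}.
\]
The left-hand side has entries $u\,[L_{ab}(u_1), L_{cd}(u_2)]$ in position $E_{ab}\otimes E_{cd}$, where $L=\Lax_i$. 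In the quantum case these commutators need not vanish, since the entries act on the same function space. To expand the right-hand side I use $P_{12}(E_{ab}\otimes E_{cd}) = E_{cb}\otimes E_{ad}$ and $(E_{ab}\otimes E_{cd})P_{12} = E_{ad}\otimes E_{cb}$. Comparing coefficients of $E_{ab}\otimes E_{cd}$ then gives the componentwise form
\[
(u_1-u_2)\,[L_{ab}(u_1), L_{cd}(u_2)] = L_{cb}(u_1)L_{ad}(u_2) - L_{cb}(u_2)L_{ad}(u_1),
\]
with the operator order kept exactly as it comes out of the expansion. The first step is to derive this carefully, tracking which factor stands on the left in each product.

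Next I check the sixteen index choices $(a,b,c,d)$. The only entry depending on $\p_i$ is $L_{11}(u) = u - \p_i$. The other entries $x_i^{-1}$, $x_i$, $0$ mutually commute and do not involve $u$. So the left-hand side vanishes unless exactly one of $(a,b)$, $(c,d)$ equals $(1,1)$ and the other is $(1,2)$ or $(2,1)$. In those cases it is $\mp u\,x_i^{\pm1}$ or its negative, by $[u-\p_i, x_i] = -x_i$ and $[u-\p_i, x_i^{-1}] = x_i^{-1}$.

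On the right-hand side, whenever neither $L_{cb}$ nor $L_{ad}$ is $L_{11}$, the two terms are equal products of commuting functions and cancel. When one of them is $L_{11}$, the difference reduces either to $(u_1-u_2)$ times an $x_i^{\pm1}$ factor, or to a commutator of $\p_i$ with $x_i^{\pm1}$. The case $L_{cb}=L_{ad}=L_{11}$, i.e. $a=b=c=d=1$, gives $(u_1-\p_i)(u_2-\p_i)-(u_2-\p_i)(u_1-\p_i)=0$, matching the vanishing left side. The cases involving $L_{22}=0$ give zero on both sides, and the cases $(a,b,c,d)$ with $L_{12}L_{21}$-type products give $x_i^{-1}x_i - x_i x_i^{-1} = 0$ in the right order.

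I expect the main obstacle to be bookkeeping rather than anything conceptual: getting the operator ordering and signs right in the mixed cases, such as $(a,b,c,d)=(1,1,1,2)$ versus $(1,2,1,1)$. There the right-hand side contains terms like $(u_1-\p_i)x_i^{-1} - (u_2-\p_i)x_i^{-1}$ or $x_i^{-1}(u_2-\p_i)-x_i^{-1}(u_1-\p_i)$, which must reproduce $u\,[L_{11}(u_1),x_i^{-1}]$ with the correct sign. To control this I would first record the four basic relations $\p_i x_i^{\pm1} = x_i^{\pm1}(\p_i \pm 1)$. I would then double-check one mixed case against the classical limit, where the commutator becomes the Poisson bracket and the identity must reduce to the already stated bracket \eqref{quadrrat}. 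This serves as a consistency check on the sign convention of $R_{12}(u)=\id - P_{12}/u$.
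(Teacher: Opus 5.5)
Your proposal is correct and is essentially the paper's route. The paper states this proposition without proof and says only ``direct verification'' for the deformed analogue. Your componentwise form $(u_1-u_2)[L_{ab}(u_1),L_{cd}(u_2)] = L_{cb}(u_1)L_{ad}(u_2)-L_{cb}(u_2)L_{ad}(u_1)$ is derived with the correct operator ordering, and all sixteen cases check.

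One small simplification: in the mixed cases the right-hand side never produces a commutator, because both products keep $L_{cb}$ on the left. So it is always just $\pm(u_1-u_2)x_i^{\pm1}$, and the entire content of the identity is $[\p_i,x_i^{\pm1}]=\pm x_i^{\pm1}$.
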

\noindent It follows from the $RLL$ relation \eqref{RLLintro} that the monodromy matrix 
\[
T(u) = \Lax_n(u) \cdots \Lax_1(u) = \begin{pmatrix}
    T_{11}(u) & T_{12}(u) \\
    T_{21}(u) & T_{22}(u)
\end{pmatrix}
\]
also satisfies the $RLL$ relation
\[
R_{12}(u_1 - u_2)T^{(1)}(u_1) T^{(2)}(u_2) = T^{(2)}(u_2) T^{(1)}(u_1) R_{12}(u_1 - u_2).
\]
Similarly to the classical case, we introduce the quantum transfer matrix
\begin{equation}\label{transquantintro}
\widehat{t}(u) = \Tr(\widetilde{T}(u))=\Tr (V T(u)) =  T_{11}(u) + Q T_{22}(u) = \sum\limits_{i = 0}^n (-1)^i \widehat{H}_i u^{n-i},
\end{equation}
which is a generating function of the pairwise commuting operators. The monodromy matrix $\widetilde{T}(u)$ also satisfies the $RLL$ relation \eqref{RLLintro}.
\begin{prop}
    The transfer matrix commutes with itself at different values of the spectral parameter
    \[
    [\widehat{t}(u_1),\widehat{t}(u_2)] = 0.
    \]
    In other words, the expansion coefficients of the transfer matrix \eqref{transquantintro} are pairwise commuting operators
    \[
    [\widehat{H}_i, \widehat{H}_j] = 0.
    \]
\end{prop}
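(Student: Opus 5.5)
The plan is the standard quantum analogue of the classical argument above. I start from the $RLL$ relation satisfied by the full twisted monodromy matrix $\widetilde{T}(u) = V T(u)$, namely
\[
R_{12}(u_1 - u_2)\widetilde{T}^{(1)}(u_1)\widetilde{T}^{(2)}(u_2) = \widetilde{T}^{(2)}(u_2)\widetilde{T}^{(1)}(u_1)R_{12}(u_1 - u_2).
\]
This follows from the $RLL$ relation \eqref{RLLintro} for each $\Lax_i$ by the usual "train" argument. Operators in different sites commute, so relations for $\Lax_i$ and $\Lax_j$ multiply, and one pushes $R_{12}$ through the product one factor at a time. For the constant factor $V$, I use $[R_{12}(u), V\otimes V] = 0$. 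This holds because $P_{12}$ commutes with $V\otimes V$ and $\id$ does too, so $V$ satisfies $RVV = VVR$ trivially.

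Next, for $u_1 \neq u_2$ with $u_1 - u_2 \neq \pm 1$, the matrix $R_{12}(u_1-u_2) = \id - P_{12}/(u_1-u_2)$ is invertible. Indeed, $P_{12}^2 = \id$, so $(\id - P/u)^{-1} = (\id + P/u)/(1 - u^{-2})$. I therefore rewrite the relation as
\[
\widetilde{T}^{(1)}(u_1)\widetilde{T}^{(2)}(u_2) = R_{12}^{-1}\,\widetilde{T}^{(2)}(u_2)\widetilde{T}^{(1)}(u_1)\,R_{12}.
\]
Now I take the trace $\Tr_{12}$ over the auxiliary space $\mathbb{C}^2\otimes\mathbb{C}^2$. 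The matrices $R_{12}$ have scalar entries, so cyclicity of the auxiliary trace is legitimate even though the entries of $\widetilde{T}$ are noncommuting operators. The right-hand side therefore becomes $\Tr_{12}\bigl(\widetilde{T}^{(2)}(u_2)\widetilde{T}^{(1)}(u_1)\bigr)$.

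Then I compute both traces. The left side is
\[
\Tr_{12}\bigl(\widetilde{T}^{(1)}(u_1)\widetilde{T}^{(2)}(u_2)\bigr) = \sum_{a,c} \widetilde{T}_{aa}(u_1)\widetilde{T}_{cc}(u_2) = \widehat{t}(u_1)\widehat{t}(u_2).
\]
The right side similarly gives $\widehat{t}(u_2)\widehat{t}(u_1)$. Here the order of the operator-valued entries matters, and it is respected because $\widetilde{T}^{(1)}$ and $\widetilde{T}^{(2)}$ act in different auxiliary factors, so only the quantum-space order survives. This yields $[\widehat{t}(u_1),\widehat{t}(u_2)] = 0$ on a generic (Zariski dense) set of $(u_1,u_2)$.

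Finally, $\widehat{t}(u)$ is a polynomial in $u$ with operator coefficients, so the identity extends to all $u_1,u_2$. Comparing coefficients of $u_1^{n-i}u_2^{n-j}$ then gives $[\widehat{H}_i,\widehat{H}_j]=0$.

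The only delicate step is the careful bookkeeping of noncommuting entries in the trace manipulation. Using scalar-valued invertibility of $R$ and the factorization of $\Tr_{12}$ handles it.
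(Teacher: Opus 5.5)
Your proof is correct and follows the paper's argument. You use the $RLL$ relation for the twisted monodromy $\widetilde{T}(u)=VT(u)$, conjugate by the scalar, invertible $R_{12}(u_1-u_2)$, and then use cyclicity of $\Tr_{12}$ to get $\widehat{t}(u_1)\widehat{t}(u_2)=\widehat{t}(u_2)\widehat{t}(u_1)$; your extra details (the inverse $(\id+P_{12}/u)/(1-u^{-2})$ for $u\neq 0,\pm1$, the identity $[R_{12},V\otimes V]=0$, and the polynomial extension to all $u_1,u_2$) are exactly what the paper leaves implicit.
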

\begin{proof}
    The proof is standard and very similar to the classical case. Indeed, consider the product of the transfer matrices in one order
    \begin{multline*}
    \widehat{t}(u_1)\widehat{t}(u_2) = \Tr_{12}(\widetilde{T}^{(1)}(u_1)\widetilde{T}^{(2)}(u_2)) 
    \\
    = \Tr_{12}\left(R_{12}(u_1-u_2)^{-1} \widetilde{T}^{(2)}(u_2)\widetilde{T}^{(1)}(u_1) R_{12}(u_1 - u_2) \right) = \Tr_{12} \left( \widetilde{T}^{(2)}(u_2)\widetilde{T}^{(1)}(u_1)  \right) 
    \\
    = \widehat{t}(u_2) \widehat{t}(u_1).
    \end{multline*}
\end{proof}
\noindent Computing the first non-trivial operators coming from the transfer matrix, we obtain the quantum versions of the Hamiltonians \eqref{Hamclass1intro}
\begin{equation}\label{Hamquant1intro}
\widehat{H}_1 = \sum\limits_{i = 1}^n x_i \partial_{x_i} ,\quad \widehat{H}_2 = \sum\limits_{i<j}^n x_i x_j \partial_{x_i} \partial_{x_j} + \sum\limits_{i = 1}^{n - 1}\frac{x_i}{x_{i + 1}} + Q \frac{x_n}{x_1}.
\end{equation}
This ends the proof of the integrability of the quantum Toda system.
%%%%%%%%%%%%%%%%%%%%%%%%%%%%%%%%%%%%%%%%
\section{Deformed Toda systems}
\subsection{Classical deformed open Toda chain}
We work with the same symplectic manifold $\mathcal{M}$ \eqref{manifold2} as in the previous section with coordinates $\{x_i\}_{i = 1}^n$ and momenta $\{ p_i \}_{i = 1}^n$, which have the following Poisson brackets
\begin{equation}\label{Poisbrack}
    \{p_i , p_j\} = \{x_i , x_j\} = 0, \quad \{p_i,x_j\} =  \delta_{ij} x_i.
\end{equation}
We introduce the following Lax matrices
\begin{equation}\label{Laxnonrelgeneral}
     L_i(u) = \begin{pmatrix}
    u - p_i & (c_i - b_i p_i - a_i p_i^2)x_i^{-1} \\
    x_i & a_i (u + p_i) + b_i 
    \end{pmatrix}, \quad a_i, b_i, c_i \in \mathbb{R},
\end{equation}
which is a deformation of a Lax matrix \eqref{LaxopennonrelNOT}, which reduces to the Lax matrix for the open Toda system for $a_i = b_i =0$ and $c_i = 1$ . It turns out that this deformed Lax matrix still satisfies the same quadratic $r$-matrix structure \eqref{quadrrat} with the same classical $r$-matrix as in \eqref{r-matrixintro}.
\begin{prop}
The Lax operator \eqref{Laxnonrelgeneral} satisfies the following classical quadratic $r$-matrix structure
\begin{equation}\label{quadrrat2}
\{L^{(1)}_i(u_1), L^{(2)}_i(u_2)\} = [r_{12}(u_1 - u_2), L^{(1)}_i(u_1) L^{(2)}_i(u_2)]
\end{equation}
with the classical $r$-matrix given by
\[
r_{12}(u_1 - u_2) = \frac{P_{12}}{u_1 - u_2}.
\]
\end{prop}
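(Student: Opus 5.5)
Since $L_i(u)$ involves only the pair $(x_i,p_i)$, it suffices to check \eqref{quadrrat2} for a single degree of freedom. Write $x=x_i$, $p=p_i$, $a=a_i$, $b=b_i$, $c=c_i$, and set $f(p)=c-bp-ap^2$, so that
\[
L(u)=\begin{pmatrix} A(u) & B(u)\\ C(u) & D(u)\end{pmatrix}=\begin{pmatrix} u-p & f(p)x^{-1}\\ x & a(u+p)+b\end{pmatrix}.
\]
The plan is to expand both sides of \eqref{quadrrat2} in the basis $E_{ab}\otimes E_{cd}$. We use $P_{12}(X\otimes Y)=\sum E_{ij}X\otimes E_{ji}Y$, so that $[P_{12},L^{(1)}L^{(2)}]=P_{12}(L\otimes L)-(L\otimes L)P_{12}$. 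Its $(ab),(cd)$ component is $L_{cb}(u_1)L_{ad}(u_2)-L_{ad}(u_1)L_{cb}(u_2)$. Hence \eqref{quadrrat2} is equivalent to the sixteen scalar identities
\[
(u_1-u_2)\{L_{ab}(u_1),L_{cd}(u_2)\}=L_{cb}(u_1)L_{ad}(u_2)-L_{ad}(u_1)L_{cb}(u_2),
\]
which are the classical Yang--Baxter (Sklyanin) algebra relations for $A,B,C,D$.

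The left-hand sides come from a few elementary brackets. From \eqref{Poisbrack} we have $\{p,x\}=x$, $\{p,x^{-1}\}=-x^{-1}$ and $\{g(p),x^{\pm1}\}=\pm g'(p)x^{\pm1}$. Consequently $\{A,C\}=-x$, $\{A,B\}=f x^{-1}$, $\{D,C\}=ax$, $\{D,B\}=-af x^{-1}$, and $\{B,C\}=f'(p)=-b-2ap$. All brackets among $A$, $D$ themselves vanish, as do $\{B,B\}$ and $\{C,C\}$, since each of these matrix entries depends only on $p$ or only on a single power of $x$ times a function of $p$. Note that none of these brackets depends on $u$.

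Next I will check the identities in groups.

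\emph{Diagonal pairs} (e.g. $A$ with $D$, $A$ with $A$). The left side vanishes, so we must show that the right side, for instance $C(u_1)B(u_2)-B(u_1)C(u_2)$, vanishes. It does, because $B$ and $C$ are independent of $u$ and their product $f$ is a scalar.

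\emph{Mixed pairs such as $A(u_1)$ with $C(u_2)$.} The right side is $C(u_1)A(u_2)-A(u_1)C(u_2)=x\bigl((u_2-p)-(u_1-p)\bigr)=-(u_1-u_2)x$, which matches. Similarly, for $D$ with $C$ the right side gives $a(u_1-u_2)x$ up to the appropriate sign and ordering. The cases involving $B$ work the same way, the factor $f$ just being carried along.

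\emph{The key case, $\{B(u_1),C(u_2)\}$.} The right side is $A(u_1)D(u_2)-D(u_1)A(u_2)$. Expanding, the $au_1u_2$ terms cancel, and what remains is $(u_1-u_2)\bigl(b+2ap\bigr)$ up to sign. This must equal $(u_1-u_2)f'(p)$. The requirement is therefore precisely that the off-diagonal entry be $(c-bp-ap^2)x^{-1}$, that is, that $f' = -(b+2ap)$ be matched with the diagonal deformation $a(u+p)+b$. I expect this to be the only place where the specific form of the deformation enters. The main work in this step is careful bookkeeping of signs and tensor orderings, so I will fix conventions once, verify this identity first, and then dispatch the remaining relations, which either vanish or follow by the symmetry $(1\leftrightarrow2)$ and antisymmetry of the bracket.
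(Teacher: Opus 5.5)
Your proposal is correct and follows essentially the paper's route: the paper also compares both sides of the bracket entry by entry, for a general ansatz with entries $u-f_1(p)$, $f_2(p)x^{-1}$, $x$, $f_3(p)+uf_4(p)$, obtaining $f_1'=1$, $f_4'=0$, $f_3'=f_4$, $f_2'=-f_3-f_1f_4$, and your key $\{B,C\}$ component is exactly its last condition. One small correction: by your own component formula the right side for $\{B(u_1),C(u_2)\}$ is $D(u_1)A(u_2)-A(u_1)D(u_2)=-(u_1-u_2)(b+2ap)$, which equals $(u_1-u_2)f'(p)$ exactly, so your ``up to sign'' hedge is unnecessary; the ordering $A(u_1)D(u_2)-D(u_1)A(u_2)$ you wrote is the right side for $\{C(u_1),B(u_2)\}$.
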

\begin{proof}
    We make the following Toda type ansatz for the Lax operator
    \[
    L_i(u) = \begin{pmatrix}
        u - f_1(p_i) & f_2(p_i)x^{-1}_i \\
        x_i & f_3(p_i) + u f_4(p_i)
    \end{pmatrix}.
    \]
    Computing both sides of the equality \eqref{quadrrat2} we obtain the following relations
    \[
    f_1'= 1, \quad f_2' = -f_3 - f_1 f_4, \quad f_4'=0, \quad f_4 = f_3'.
    \]
    We can always set $f_1 = p_i$ by the canonical transformation. Other equations are trivially solved by 
    \[
    f_4 = a_i, \quad f_3 = a_i p_i + b_i, \quad f_2 = c_i - b_i p_i - a_i p_i^2.
    \]
\end{proof}
One can continue working with the full deformed Lax matrix \eqref{Laxnonrelgeneral}, but we will specify the deformation parameters $a_i = S^2$, $b_i = 0$ and $c_i = 1$ to simplify the formulae and make connection with the classical version of the deformed Toda system \eqref{MPdeform}. We further make a canonical transformation $x_i \mapsto (1 + S p_i)x_i$, which preserves the Poisson brackets \eqref{Poisbrack} and brings the Lax matrix \eqref{Laxnonrelgeneral} to a more symmetric form
\begin{equation}\label{Laxopennonrel}
    L_i(u) = \begin{pmatrix}
    u - p_i & (1 - S p_i)x_i^{-1} \\
    (1 + S p_i)x_i & S^2(u + p_i)
    \end{pmatrix}.
\end{equation}
Introduce the classical monodromy matrix 
\[
T(u) = L_n(u) \cdots L_1(u)
\]
and define the classical Poisson commuting Hamiltonians
\begin{equation}\label{Hamopennonrel}
T_{11}(u) = \sum\limits_{i = 0}^n (-1)^i H_i u^{n - i}.
\end{equation}
It follows from the quadratic classical $r$-matrix structure \eqref{quadrrat2} that the Hamiltonians $H_i$ from \eqref{Hamopennonrel} form a Poisson commutative family
\[
\{H_i, H_j \} = 0. 
\]
We can compute the first few classical Hamiltonians explicitly 
\begin{equation}\label{Hamclass1}
H_1 = \sum\limits_{i = 1}^n p_i ,\quad H_2 = \sum\limits_{i<j}^n p_i p_j + \sum\limits_{i<j}S^{2(j-i-1)}(1+S p_i)(1 - S p_j) \frac{x_i}{x_j}
\end{equation}
and 
\begin{multline}\label{H3}
    H_3 = \sum\limits_{k < j <i} p_k p_j p_i + \sum\limits_{k < j < i}\left( S^{2(i - j - 1)} p_k (1 - S p_i)(1 + S p_j) \frac{x_j}{x_i} \right.
    \\
    \left.- S^{2(i - k - 1)} p_j (1 - S p_i)(1 + S p_k) \frac{x_k}{x_i} + S^{2(j - k - 1)} p_i (1 - S p_j)(1 + S p_k) \frac{x_k}{x_j} \right).
\end{multline}
Observe that setting $S=0$ we obtain the ordinary Toda Hamiltonian with the nearest-neighbour interaction.
One can also consider a more conventional form of the Hamiltonian given by
\begin{equation}\label{Hamp^2}
H = \sum\limits_{i = 1}^n \frac{p_i^2}{2} - \sum\limits_{i<j}S^{2(j-i-1)}(1+S p_i)(1 - S p_j) \frac{x_i}{x_j} = \frac{H_1^2}{2} - H_2.
\end{equation}
\begin{remark}
    Our form of the deformed Toda Hamiltonian \eqref{Hamp^2} differs from the classical version of the operator \eqref{MPdeform} considered in \cite{mucciconi2022spin} by a simple change of variables in the deformation parameter $s \mapsto S^{-1}$ and by choosing the exponentiated coordinates $x_i = \exp(q_i)$.
\end{remark}
\begin{remark}\label{remark1}
    One can view the form of the Hamiltonian \eqref{Hamp^2} in the following way. The sum in the "potential" term runs over the positive part of the root system $A_{n - 1}$. For the exponent of the deformation parameter $S$, we can observe that it is exactly equal to $\ell(s_{ij}) - 1$, where $\ell(s_{ij})$ is the length of reflection with respect to the corresponding positive root in the symmetric group $S_n$. However, we do not know how to rewrite the Hamiltonian \eqref{Hamp^2} in the coordinate free language due to the quadratic in momenta term in the "potential" part of the Hamiltonian. We also note that the form of the "potential" term in the Hamiltonian \eqref{Hamp^2} is similar to the form of the connection in the quantum differential equation for the flag varieties \cite{mihalcea2005equivariant}, \cite{braverman2011quantum}. 
\end{remark}
\noindent Equations of motion with respect to the Hamiltonian \eqref{Hamp^2} are given by
\begin{equation}\label{xdot}
\dot{x}_i = \{H, x_i\} = \left(p_i  - \sum\limits_{j > i}^n  S^{2(j - i) - 1}(1 - S p_j)\frac{x_i}{x_j} + \sum\limits_{j < i}^n S^{2(i-j) - 1} (1 + S p_j) \frac{x_j}{x_i} \right) x_i,
\end{equation}
\begin{align}\label{pdot}
\dot{p}_i = \{H, p_i\} = \sum\limits_{j > i}^n & S^{2(j - i - 1)}(1 + S p_i)(1 - S p_j) \frac{x_i}{x_j} 
\\
\nonumber &- \sum\limits_{j < i}^n S^{2(i - j - 1)}(1 + S p_j)(1 - S p_i) \frac{x_j}{x_i}.
\end{align}
We can also rewrite the equations of motion above in terms of the $n \times n$ Lax pair. Consider the matrices
\begin{equation}\label{Laxnn}
L_{ij} = \delta_{ij} p_i + \delta_{i<j} S^{j - i - 1} (S p_j - 1)\frac{x_i}{x_j} + \delta_{i>j} S^{(i - j - 1)}(S p_j + 1)
\end{equation}
and 
\[
M_{ij} = \delta_{ij} \sum\limits_{a > i}^n S^{2(a-i) - 1} (S p_a - 1) \frac{x_i}{x_a} + \delta_{j > i} S^{j - i - 1} (1 - S p_j) \frac{x_i}{x_j}.
\]
Observe that the following identity holds for the $M$-operator
\[
\sum\limits_{j = 1}^n S^j M_{ij} = 0.
\]
\begin{prop}
    The equations of motion \eqref{xdot} and \eqref{pdot} are equivalent to the matrix Lax equation
    \begin{equation}\label{Laxeq}
        \dot{L} = \{H, L\} = [M , L].
    \end{equation}
    Thus, the coefficients of the characteristic polynomials of the Lax matrix $L$ Poisson commute with the Hamiltonian \eqref{Hamp^2}.
\end{prop}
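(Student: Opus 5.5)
The plan is a direct entrywise verification of $\{H,L_{ij}\} = [M,L]_{ij}$, using the equations of motion \eqref{xdot}, \eqref{pdot} to compute the left-hand side. The ingredients are Leibniz and the brackets \eqref{Poisbrack}. For the off-diagonal entries we need
\[
\frac{d}{dt}\frac{x_i}{x_j} = \frac{x_i}{x_j}\Bigl(\frac{\dot x_i}{x_i}-\frac{\dot x_j}{x_j}\Bigr),
\]
together with $\dot p_j$, since each $L_{ij}$ is a product of a momentum-dependent prefactor and a monomial $x_i/x_j$ (or a constant for $i>j$). I note in passing that as printed the lower-triangular entries contain no $x_i/x_j$. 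I would first check on $n=2$ whether the intended entry is $S^{i-j-1}(Sp_j+1)$ alone or carries $x_i/x_j$ as well. The choice is dictated by requiring $\operatorname{tr}L^2/2$ to reproduce \eqref{Hamp^2}: pairing $L_{ij}L_{ji}$ for $i<j$ gives $S^{2(j-i-1)}(Sp_j-1)(Sp_i+1)\,x_i/x_j$, which matches $-H$'s potential only if the lower entries have no $x$-dependence. So the printed form is consistent, and I would work with it.

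Organize the check by position. First, for the diagonal $(i,i)$: $\dot L_{ii}=\dot p_i$ is given by \eqref{pdot}, while
\[
[M,L]_{ii}=\sum_{j>i}M_{ij}L_{ji}-\sum_{j<i}L_{ij}M_{ji}.
\]
The diagonal part of $M$ drops out here. Substituting the formulas, each summand gives exactly one term of \eqref{pdot}. Second, for the lower triangle $i>j$: $\dot L_{ij}=S^{i-j}\dot p_j$, a combination of the $x$-monomials from \eqref{pdot}. On the right-hand side, $(ML-LM)_{ij}$ collects the diagonal contributions $(M_{ii}-M_{jj})L_{ij}$, the cross terms $\sum_{k>i}M_{ik}L_{kj}-\sum_{k<j}L_{ik}M_{kj}$, and, for $j<k<i$, the terms $-L_{ik}M_{kj}$ (these vanish since $M$ is upper triangular) and $M_{ik}L_{kj}$ (which vanish for $k<i$). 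Third, for the upper triangle $i<j$, the same expansion applies, and one must also account for $p$-dependence in the prefactor $(Sp_j-1)$ in addition to the $x$-dynamics.

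In each case the identities reduce to telescoping geometric sums in $S$. The key tool is the stated relation $\sum_j S^jM_{ij}=0$. It means the diagonal of $M$ is minus the weighted row sum, $M_{ii}=-\sum_{j>i}S^{j-i}M_{ij}$, which is what converts $(M_{ii}-M_{jj})L_{ij}$ into sums that cancel against the cross terms. The main obstacle is the upper-triangular entries with $j>i+1$. There the products $L_{ik}M_{kj}$ and $M_{ik}L_{kj}$ for $i<k<j$ generate quadratic-in-$x$ terms $\frac{x_i}{x_k}\frac{x_k}{x_j}$, and also terms with $x_i/x_a$, $a\neq j$, that must all cancel except for those matching $\frac{d}{dt}(x_i/x_j)$. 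I would handle this by grouping terms according to their $x$-monomial and checking that the powers of $S$ align: $(j-i-1)+2(a-j)-1 = 2(a-i)-1-(j-i)+\dots$. The $(1\pm Sp)$ factors are combined using $(Sp_k+1)-(Sp_k-1)=2$-type identities. I would verify $n=3$ explicitly first to fix signs.

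Finally, the consequence is standard. From $\dot L=[M,L]$ we get $\frac{d}{dt}\operatorname{tr}L^k=k\operatorname{tr}(L^{k-1}[M,L])=0$, so all coefficients of $\det(\lambda-L)$ are conserved, i.e. Poisson commute with $H$.
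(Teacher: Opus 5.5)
Your proposal is correct and takes the same route as the paper, whose proof is just the direct entrywise check of $\dot L=\{H,L\}=[M,L]$ followed by the standard isospectrality argument you give. The step you flag as the main obstacle is in fact easy. Strictly above the diagonal $M_{ab}=-L_{ab}$, so the cross terms with $i<k<j$ cancel identically. The remaining terms of $[M,L]_{ij}$ then match $\dot L_{ij}=S^{j-i}\dot p_j\,\frac{x_i}{x_j}+L_{ij}\bigl(\frac{\dot x_i}{x_i}-\frac{\dot x_j}{x_j}\bigr)$ term by term, once the $b<j$ sums coming from $\dot p_j$ and from $\dot x_j/x_j$ cancel each other.
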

\begin{proof}
    Direct but tedious calculation.
\end{proof}
\noindent It is a simple exercise to see that the Hamiltonian \eqref{Hamp^2} can be extracted from the $n \times n$ Lax operator by
\[
H = \frac{1}{2}((\Tr L)^2 - \Tr L^2 ).
\]
However, it is not clear a priori that the family of Poisson commuting Hamiltonians $H_i$ given by \eqref{Hamopennonrel} coincides with the family generated by the characteristic polynomial of the Lax matrix \eqref{Laxnn}.
\begin{theorem}
    The family of Poisson commuting Hamiltonians defined by the $2 \times 2$ Lax operators \eqref{Laxopennonrel} coincides with the family generated by the characteristic polynomials of the $n\times n$ Lax matrix \eqref{Laxnn}
    \begin{equation}\label{spectdual}
    \underset{n \times n}{\det}(u \id_{n\times n} -L) = T_{11}(u).
    \end{equation}
\end{theorem}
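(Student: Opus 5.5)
The plan is to prove \eqref{spectdual} by induction on $n$. I will show that both sides satisfy the same linear recursion in the number of particles, coupled to an auxiliary quantity. Write $T^{[n]}(u)=L_n(u)\cdots L_1(u)$ and $L^{[n]}$ for the $n\times n$ matrix \eqref{Laxnn} built from particles $1,\dots,n$. From $T^{[n]}=L_n T^{[n-1]}$ and \eqref{Laxopennonrel} we get
\[
T^{[n]}_{11}=(u-p_n)T^{[n-1]}_{11}+(1-Sp_n)x_n^{-1}T^{[n-1]}_{21},
\]
\[
T^{[n]}_{21}=(1+Sp_n)x_nT^{[n-1]}_{11}+S^2(u+p_n)T^{[n-1]}_{21}.
\]
Hence it suffices to find a determinantal quantity $B_{n}(u)$ such that $D_n:=\det(u-L^{[n]})$ and $B_n$ satisfy the same pair of recursions with the same initial data. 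For $n=1$ both sides give $u-p_1$, and $T^{[1]}_{21}=(1+Sp_1)x_1$. Note that $L^{[n-1]}$ is the upper-left block of $L^{[n]}$, which is what makes the induction possible.

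For the first recursion I will expand $D_n$ along the last column. The diagonal entry contributes $(u-p_n)D_{n-1}$. The off-diagonal entries are $-L_{in}=S^{n-i-1}(1-Sp_n)x_i/x_n$, so all of them carry the common factor $(1-Sp_n)x_n^{-1}$. The remaining cofactors involve rows $1,\dots,n-1$ together with row $n$ of $u-L^{[n]}$ restricted to columns $j<n$. Those last-row entries are $-S^{n-j-1}(Sp_j+1)$, which are independent of $x_n,p_n$. I therefore define $B_{n-1}$ as the resulting sum $\sum_i S^{n-i-1}x_i\,(\text{cofactor})$. Equivalently, $B_{n-1}$ is $x$ times the determinant of the $n\times n$ matrix obtained from $u-L^{[n-1]}$ by bordering with a "phantom" last row $\big(-S^{n-j-1}(Sp_j+1)\big)_{j<n}$ and a last column $(S^{n-1-i}x_i)_{i<n}$ with a zero corner, up to a sign to be fixed. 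With this definition $D_n=(u-p_n)D_{n-1}+(1-Sp_n)x_n^{-1}B_{n-1}$ holds by construction. The claim is then reduced to $B_{n}=T^{[n]}_{21}$.

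For the second recursion I will compute $B_n$, the bordered $(n+1)\times(n+1)$ determinant, by a row operation. The phantom row has entries $-S^{n-j}(Sp_j+1)$ for $j<n$, while row $n$ of $u-L^{[n]}$ has entries $-S^{n-j-1}(Sp_j+1)$ for $j<n$. Subtracting $S$ times row $n$ from the phantom row kills all entries with $j<n$. What remains in that row is the entry in column $n$, namely $-(Sp_n+1)+S(u-p_n)$ (up to the sign conventions), and the entry in the border column, $x_n-S\cdot S^{0}x_n\cdot(\ldots)$. I will then expand along this sparse row. One term should give $(1+Sp_n)x_nD_{n-1}$ and the other should give $S^2(u+p_n)B_{n-1}$; the latter arises after recognising a bordered determinant of size $n$ by the same manipulation applied to the border column. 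I will also have to adjust the border column by subtracting $S$ times column $n$. The upper-triangular entries $S^{j-i-1}(Sp_j-1)x_i/x_j$ have the matching geometric structure in $i$, so this column operation should clean up in the same way.

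The main obstacle is this last step: arranging the row and column operations so that the two surviving terms match exactly $(1+Sp_n)x_n$ and $S^2(u+p_n)$, with correct signs and powers of $S$. In particular, the $S^2(u+p_n)$ coefficient must emerge from combining the diagonal entry $u-p_n$ with the corrections $\pm Sp_n$ and $\pm1$ from the row and column operations. If the naive bordering does not close, I would first try conjugating $L$ by $\mathrm{diag}(x_iS^{-i})$, which does not change the characteristic polynomial. This makes the strictly lower and upper parts rank-one-like in the indices, after which the row and column operations become transparent. Once both recursions are established, induction gives $D_n=T^{[n]}_{11}$, which is \eqref{spectdual}.
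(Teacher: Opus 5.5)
Your approach is correct, and it takes a different route from the paper's. The paper eliminates $T_{21}$ from the first-order system to get a three-term recurrence for $T_{11}$ alone, whose coefficients contain $(1-Sp_{n-1})^{-1}$. It then shows that $\det(u-L)$ obeys the same recurrence: it subtracts a multiple of column $n-1$ from column $n$ and $S$ times row $n-1$ from row $n$, so that the last row and column are supported on their last two entries. You instead keep the pair $(T_{11},T_{21})$ and realise $T_{21}$ as the bordered determinant $B_n$. This identification is exact, with no extra factor or sign, because expanding $B_n$ along the border column reproduces your cofactor sum. Your route costs an auxiliary object, but it gives a determinantal formula for $T_{21}$ and a first-order induction, and it never divides by $1-Sp_{n-1}$. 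The paper's argument is shorter.

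Your last step does close, though not in the way you predict. You need neither the conjugation nor a column operation.
\begin{itemize}
\item Subtracting $S$ times row $n$ from the phantom row turns it into $(0,\dots,0,-(1+Su),-Sx_n)$. The column-$n$ entry is $-(1+Sp_n)-S(u-p_n)=-(1+Su)$, and the zero corner becomes $-Sx_n$.
\item The cofactor of $-(1+Su)$ is $u-L^{[n]}$ with its last column replaced by the border column. By linearity in that column, its determinant is $x_nD_{n-1}+SB_{n-1}$.
\item Expanding along the sparse row therefore gives
\[
B_n=(1+Su)\bigl(x_nD_{n-1}+SB_{n-1}\bigr)-Sx_nD_n .
\]
\end{itemize}
Neither term is one of the target terms by itself. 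Substituting your first recursion for $D_n$ makes the coefficients collapse to $(1+Su)-S(u-p_n)=1+Sp_n$ and $S(1+Su)-S(1-Sp_n)=S^2(u+p_n)$, as required. If you do use a column operation instead, the multiplier must be $Sx_n/(1-Sp_n)$, not $S$, because column $n$ carries the factor $(1-Sp_n)/x_n$.
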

\begin{proof}
    The proof in the non-deformed case is based on the three-term recurrence relation. Let us denote by $A^{(n)}(u)$ and $B^{(n)}(u)$ the right and left hand sides of the equality \eqref{spectdual}. We are going to prove that both of these generating functions satisfy the following three-term recurrence 
    \begin{multline}\label{recurr}
    W^{(n)} = \left((u - p_n) + S^2 (1 - S p_n)\frac{(u + p_{n-1})x_{n-1}}{(1 - S p_{n-1})x_n} \right) W^{(n-1)}
    \\
    + \frac{(1 - S^2 u^2)(1 - S p_n) x_{n-1}}{(1 - S p_{n-1})x_n} W^{(n-2)},
    \end{multline}
    which reduces to the three-term recurrence relation for a non-deformed Toda chain by setting $S = 0$.

    We first prove this recurrence for the function $A^{(n)} = T_{11}(u)$. Let us denote the off-diagonal elements $C^{(n)} = T_{21}(u)$. From the definition of the monodromy matrix we have the following recurrence
    \[
    A^{(i)} = (u - p_i) A^{(i-1)} + \frac{1 - S p_i}{x_i} C^{(i-1)}, \quad C^{(i)} = x_i (1 + S p_i) A^{(i-1)} + S^2 (u + p_i) C^{(i - 1)}.
    \]
    It is a standard exercise to derive a second order recurrence from the system of the first order discrete equations; thus, we obtain the recurrence \eqref{recurr} for $W = A$.
    
    To prove the recurrence for the left hand side of \eqref{spectdual} we bring the matrix $u \id_{n \times n} - L $ to the form
\[
    \left(
    \begin{array}{ccccc}
    \ast &  \ldots  & \ldots & \ast & 0 \\

    \vdots &  \ddots &  & \vdots & \vdots  \\

    \vdots &  & \ddots & \ast & 0 \\ 

    \ast &  \ldots & \ast  & \ast & \ast \\

    0 &  \ldots & 0 & \ast & \ast
    \end{array}
    \right).
\]
Indeed, let us note that by combining the two last columns (rows), we can achieve this form; these operations do not change any principal minor in question. We are then interested in the form of the $2 \times 2$ submatrix at the right-bottom of the matrix we have obtained, which has the form
\begin{equation*}
\begin{array}{c}
\displaystyle{
\begin{pmatrix}
    u - p_{n-1} & & \frac{(1 - S p_n)(1 - S u) x_{n-1}}{(1 - S p_{n-1}) x_n} \\ \\
    -(1 + S u) & &  u - p_n + S^2 \frac{(1 - S p_n)(u + p_{n-1})x_{n-1}}{(1 - S p_{n-1})x_n}
\end{pmatrix}.
}
\end{array}
\end{equation*}
The recurrence relation \eqref{recurr} is obtained by expanding the characteristic polynomial with respect to the last row or column. Since both of the functions satisfy the same recurrence, and it is trivial to check that they agree in the first two cases, the theorem is proved.
\end{proof}
\begin{remark}
    This phenomenon is well-known for the non-deformed affine Toda systems as spectral duality \cite{faddeev1987hamiltonian} and has the form
    \[
    z\underset{n \times n}{\det}(u - L(z)) = -\underset{2 \times 2}{\det}(z - T(u)),
    \]
    where $L(z)$ is an $n\times n$ Lax matrix for the affine Toda chain with a spectral parameter. It is very interesting if this duality survives the deformation on the affine level.
\end{remark}
Similarly to the non-deformed case, we can consider the weighted trace of the monodromy matrix to obtain an affine(closed) version of the deformed Toda system. Indeed, let us consider
\begin{equation}\label{transperiod}
t(u) = T_{11}(u) + Q T_{22}(u) = \sum\limits_{i = 0}^n (-1)^i H^{\rm per}_i u^{n-i}.
\end{equation}
It follows from the classical $r$-matrix structure that the affine classical Hamiltonians Poisson commute
\[
\{H^{\rm per}_i , H^{\rm per}_j \} = 0.
\]
We compute the simplest Hamiltonians to be
\begin{align}\label{classHam2}
H_1^{\rm per} =& (1-Q S^{2n})\sum\limits_{i = 1}^n p_i, 
\\
\label{classHam22} H_2^{\rm per} =& (1 + Q S^{2n})\sum\limits_{i<j}^n p_i p_j + \sum\limits_{i<j}^n S^{2(j-i-1)} (1+Sp_i)(1-Sp_j) \frac{x_i}{x_j}
\\
\nonumber +& Q \sum\limits_{i<j}^n S^{2(n+i-j-1)} (1-Sp_i)(1+Sp_j)\frac{x_j}{x_i}.
\end{align}
It is easy to observe that setting $S = 0$ we obtain a non-deformed closed Toda chain Hamiltonian.
\begin{remark}\label{remark2}
    We can again observe that the sum in the "potential" term of the Hamiltonian \eqref{classHam22} runs over the subset of the positive roots of the affine root system $\widetilde{A}_{n - 1}$. Indeed, we can think of the last summand in \eqref{classHam22} as the sum over the positive affine roots of the form $e_j-e_i + \delta$, where $\delta$ is an imaginary root. The exponent of $S$ is exactly $\ell(s_{\delta+e_j-e_i}) - 1 = 2 (n + i - j -1)$, the parameter $Q$ can be thought of as $Q = e^{\delta}$. We note that the form of the sum is reminiscent of the quantum affine Chevalley formula for affine flag varieties \cite{mare2018affine}.
\end{remark}
\subsection{Classical van Diejen-type deformed Toda system}
It is well-known that in the non-deformed case one can obtain Toda systems with boundary terms via the reflection equation and associated quadratic Poisson brackets \cite{sklyanin1988boundary}. 

We consider the following quadratic $r$-matrix structure arising from the reflection equation
\begin{multline}\label{reflclas}
    \{K_1(u_1), K_2(u_2) \} = [ \frac{P_{12}}{u_1 - u_2},K_1(u_1) K_2(u_2)]
    \\
    - \frac{1}{u_1 + u_2} (K_2(u_2) P_{12} K_1(u_1) - K_1(u_1) P_{12} K_2(u_2)).
\end{multline}
Let us consider the two following numerical solutions \cite{sklyanin1988boundary}
\begin{equation}\label{reflsolv}
K^+(u) = \begin{pmatrix}
    a_1 & d_1 u \\
    u & a_1
\end{pmatrix}, \quad K^-(u) = \begin{pmatrix}
    a_2 & u \\
    d_2 u & a_2
\end{pmatrix}
\end{equation}
to the classical reflection equations \eqref{reflclas}. We also introduce an additional matrix
\[
\sigma = \begin{pmatrix}
    0 & -1 \\
    1 & 0
\end{pmatrix}.
\]
We construct an open boundary transfer matrix via a standard formula
\[
T(u) = K^+(u) L_n(u) \cdots L_1(u) K^-(u) \sigma L^T_1(-u) \cdots L_n^T(-u) \sigma^{-1}.
\]
The classical Poisson commuting Hamiltonians are defined as expansion coefficients of the trace of the open boundary monodromy matrix
\begin{equation}\label{BChamgener}
\Tr T(u) = \sum\limits_{i = 0}^{n+1} (-1)^{n-1} H_i u^{2(n +1-i)}, \quad \{H_i, H_j\} = 0.
\end{equation}
Two of these Hamiltonians are actually trivial
\[
H_0 = -1-d_1d_2S^{4n}, \quad H_{n + 1} = -2a_1a_2.
\]
The first nontrivial Hamiltonian is
\begin{multline}\label{BCHamclas}
H_1 = \sum\limits_{i = 1}^n (1+d_1 d_2 S^{4n}) p_i^2 - 2a_1 a_2 S^{2n} - \sum\limits_{i < j}\left(2S^{2(j - i - 1)}(1 + S p_i)(1 - Sp_j)\frac{x_i}{x_j}\right.
\\
+ 2d_1 S^{2(2n - i - j)}(1 + S p_i)(1 + S p_j)x_i x_j + 2d_1 d_2 S^{2(2n + i - j - 1)} (1 + S p_j)(1 - S p_i)\frac{x_j}{x_i} 
\\
\left.+ 2d_2 S^{2(i + j - 2)}(1 - S p_i)(1 - S p_j)\frac{1}{x_i x_j} \right) - \sum\limits_{i = 1}^n\left( 2(a_1+a_2 d_1 S^{2n}) S^{2(n - i)} (1 + S p_i)x_i\right. 
\\
\left.+ d_1 S^{4(n - i)} (1 + S p_i)^2 x_i^2 + 2(a_2 + a_1 d_2 S^{2n}) S^{2(i - 1)} (1 - S p_i)\frac{1}{x_i} + d_2 S^{4(i - 1)} (1 - S p_i)^2 \frac{1}{x_i^2}  \right).
\end{multline}
Observe that setting in the formula above $S = 0$, we obtain the van Diejen type Toda system \cite{sklyanin1988boundary}, \cite{van1995difference}
\[
H_1|_{S = 0} = \sum\limits_{i = 1}^n p_i^2 - 2 \sum\limits_{i = 1}^{n - 1} \frac{x_i}{x_{i+1}} - 2a_1 x_n - d_1 x_n^2 - 2 a_2 x_1^{-1} - d_2 x_1^{-2}.
\]
\subsection{Quantum deformed Toda systems}
Most of the constructions above can be easily upgraded to the quantum level; thus, we omit most of the details and present the resulting quantum integrable systems.

We work with the canonical quantisation of the manifold $\mathcal{M} = (\mathbb{R}^{\times})^n \times \mathbb{R}^n$, i.e. the coordinates and momenta are quantised according to 
\[
p_i \mapsto \widehat{p}_i:= x_i \partial_{x_i}, \quad x_i \mapsto x_i, \quad [x_i \partial_{x_i}, x_j] = \delta_{ij} x_j.
\]
We consider the same local Lax operator \eqref{Laxopennonrel} as before, but with ordering of the non-commuting operators
\begin{equation}\label{Laxquanopennonrel}
    \Lax_i(u) = \begin{pmatrix}
    u - x_i \partial_{x_i} & x_i^{-1} (1 - S x_i \partial_{x_i}) \\
    x_i (1 + S x_i \partial_{x_i}) & S^2(u + x_i \partial_{x_i})
    \end{pmatrix}.
\end{equation}
The Lax operator \eqref{Laxquanopennonrel} satisfies the quantum version of the classical $r$-matrix bracket.
\begin{prop}
    The Lax operator \eqref{Laxquanopennonrel} satisfies the quantum exchange $RLL$ relation
    \begin{equation}\label{RLL}
    R_{12}(u_1 - u_2) \Lax^{(1)}_i(u_1) \Lax^{(2)}_i(u_2) = \Lax^{(2)}_i(u_2) \Lax^{(1)}_i(u_1)  R_{12}(u_1 - u_2),
    \end{equation}
    which quantises the quadratic Poisson structure \eqref{quadrrat2}. The quantum $R$-matrix is given by
    \[
    R_{12}(u) =  \id - \frac{P_{12}}{u},
    \]
    where $P_{12}$ is a permutation operator.
\end{prop}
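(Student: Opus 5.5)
We will verify the $RLL$ relation \eqref{RLL} directly, entry by entry, in the $16$ components of $\End(\mathbb{C}^2\otimes\mathbb{C}^2)$. Write $\Lax_i(u)=\begin{pmatrix}A(u)&B(u)\\ C(u)&D(u)\end{pmatrix}$ with $A(u)=u-\p_i$, $B(u)=x_i^{-1}(1-S\p_i)$, $C(u)=x_i(1+S\p_i)$, $D(u)=S^2(u+\p_i)$, and set $u=u_1-u_2$. Multiplying \eqref{RLL} by $u$ and rearranging, it is equivalent to the standard Yangian-type exchange relations
\[
(u_1-u_2)\,[\Lax_{ab}(u_1),\Lax_{cd}(u_2)] = \Lax_{cb}(u_1)\Lax_{ad}(u_2)-\Lax_{cb}(u_2)\Lax_{ad}(u_1),\qquad a,b,c,d\in\{1,2\}.
\]
To derive these, we use $P_{12}\Lax^{(1)}(u_1)\Lax^{(2)}(u_2)=\Lax^{(2)}(u_1)\Lax^{(1)}(u_2)P_{12}$ and then compare matrix units $E_{ab}\otimes E_{cd}$ on both sides. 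So the first step is to record this component form. We will check the sign convention on the known non-deformed operator \eqref{quanLaxintro}, for which the earlier proposition (the quantum analogue of \eqref{quadrrat}, relation \eqref{RLLintro}) holds.

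The only operator identity we need is the shift rule coming from $[\p_i,x_i]=x_i$:
\[
f(\p_i)\,x_i = x_i\, f(\p_i+1),\qquad f(\p_i)\,x_i^{-1}=x_i^{-1} f(\p_i-1)
\]
for polynomial $f$. With it, every product of two entries reduces to a normal form $x_i^{k}g(\p_i,u_1,u_2)$ with $k\in\{-1,0,1\}$, and each relation becomes an identity between polynomials in $\p_i$. Several relations are immediate:
\begin{itemize}
\item $[A(u_1),A(u_2)]=[D(u_1),D(u_2)]=[A(u_1),D(u_2)]=0$, since all three entries are functions of $\p_i$ only.
\item $[B(u_1),B(u_2)]=[C(u_1),C(u_2)]=0$, since $B$ and $C$ do not depend on $u$.
\end{itemize}
The corresponding right-hand sides must then vanish. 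For instance, the $(1,1;1,1)$ entry gives $A(u_1)A(u_2)-A(u_2)A(u_1)=0$, and the $(1,2;1,2)$ entry gives $B(u_1)B(u_2)-B(u_2)B(u_1)=0$. These we verify at once.

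The mixed relations carry the content. The $(A,B)$, $(A,C)$, $(B,D)$ and $(C,D)$ relations each produce a single power $x_i^{\pm1}$ times a polynomial identity, linear in $u_1,u_2$, which we check after shifting. The diagonal-type relations (entries of type $(1,2;2,1)$ and $(1,1;2,2)$) involve $[B(u_1),C(u_2)]$ and $[A(u_1),D(u_2)]$ against combinations such as $C(u_1)B(u_2)-C(u_2)B(u_1)$ and $A(u_1)D(u_2)-A(u_2)D(u_1)$. For example, the shift rule gives
\[
BC=(1-S(\p_i+1))(1+S\p_i),\qquad CB=(1+S(\p_i-1))(1-S\p_i),
\]
so $[B,C]=-2S^2\p_i$. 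On the other side, $(A(u_1)D(u_2)-A(u_2)D(u_1))/(u_1-u_2)$ is $S^2$ times a linear expression in $\p_i$, and we compare the two.

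We expect these diagonal-type relations, where the $S$-deformation of $B,C$ must exactly cancel against the $S^2$ in $D$, to be the main obstacle; the rest is bookkeeping. As a cross-check, we can take the quantum determinant of $\Lax_i$, which should be central, equal to $S^2(u^2-\ldots)$-type scalars consistent with the factor $1-S^2u^2$ in \eqref{recurr}. We can also confirm that the classical limit reproduces the solution $f_1,\dots,f_4$ found in the proof of the classical proposition, \eqref{quadrrat2}, which guarantees that the chosen ordering in \eqref{Laxquanopennonrel} ($x_i^{\pm1}$ to the left) is the right one. If a discrepancy appears, the first fix to try is reordering the off-diagonal entries, i.e. putting $(1\mp S\p_i)$ on the left.
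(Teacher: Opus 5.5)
Your proposal is correct and takes the same approach as the paper, whose proof is simply ``direct verification''. Your component form $(u_1-u_2)[\Lax_{ab}(u_1),\Lax_{cd}(u_2)]=\Lax_{cb}(u_1)\Lax_{ad}(u_2)-\Lax_{cb}(u_2)\Lax_{ad}(u_1)$ is right, and with the shift rule all sixteen entries do close; the only entries where the $S$-deformation must actually balance are $(1,2;2,1)$ and $(2,1;1,2)$. Just pair $[B,C]=-2S^2\p_i$ with $(D(u_1)A(u_2)-D(u_2)A(u_1))/(u_1-u_2)=-2S^2\p_i$, as your own formula dictates; the combination $A(u_1)D(u_2)-A(u_2)D(u_1)$ goes with $[C,B]$, and pairing it with $[B,C]$ would give a spurious sign mismatch. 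Also note that the classical-limit cross-check cannot certify the operator ordering; only the direct check does that.
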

\begin{proof}
    Direct verification.
\end{proof}
\noindent Thus, we construct the monodromy and transfer matrices
\begin{equation}\label{transquant}
T(u) = \Lax_n(u) \cdots \Lax_1(u) = \begin{pmatrix}
    T_{11}(u) & T_{12}(u) \\
    T_{21}(u) & T_{22}(u)
\end{pmatrix}, \quad \widehat{t}(u) = T_{11}(u) + Q T_{22}(u) 
\end{equation}
by the same formulae as in the classical case. The quantum transfer matrix \eqref{transquant} is a generating function of the quantum integrals of motion
\begin{equation}\label{quanttransfer}
    \widehat{t}(u) = \sum\limits_{i = 0}^n (-1)^i \widehat{H}^{\rm per}_i u^{n - i}, \quad [\widehat{H}^{\rm per}_i , \widehat{H}^{\rm per}_j] = 0.
\end{equation}
The simplest quantum Hamiltonians have the same form as in the classical case \eqref{classHam2}, \eqref{classHam22}
\begin{align}\label{quantHam2}
\widehat{H}_1^{\rm per} =& (1-Q S^{2n})\sum\limits_{i = 1}^n \p_i, 
\\
\label{quantHam22} \widehat{H}_2^{\rm per} =& (1 + Q S^{2n})\sum\limits_{i<j}^n \p_i \p_j + \sum\limits_{i < j}^n S^{2(j - i - 1)}  \frac{x_i}{x_j} (1 + S \p_i)(1 - S \p_j)
\\
\nonumber +& Q \sum\limits_{i < j}^n S^{2(n + i - j - 1)} \frac{x_j}{x_i} (1 - S \p_i)(1 + S \p_j).
\end{align}
Setting $Q = 0$ in the formulae above, we obtain a quantum open deformed Toda chain first introduced in \cite{mucciconi2022spin} up to the change of the parameter $s \mapsto S^{-1}$. The third quantum Hamiltonian for the open deformed Toda chain is exactly the same as in the classical case \eqref{H3} with the only difference that the momentum operators should be ordered to the right of the coordinates.
%%%%%%%
We can also provide an $n \times n$ quantum Lax matrix as in the classical case. Consider the operator valued matrix 
\[
\widehat{L}_{ij} = \delta_{ij} \p_i + \delta_{i<j} S^{j - i - 1} \frac{x_i}{x_j} (S \p_j - 1) + \delta_{i>j} S^{(i - j - 1)} \frac{x_j}{x_i} (S \p_j + 1).
\]
The quantum integrals of motion are given by the characteristic polynomial 
\begin{equation}\label{quantumpolynom}
\det(u \id_{n \times n} - \widehat{L}) = \sum\limits_{i = 0}^n (-1)^i \HH_i u^{n-i}.
\end{equation}
Observe that there is no problem with the computation of the determinant of the matrix with non-commuting elements since elements in different rows and columns commute. The commutativity of quantum integrals follows from the relation
\[
\det(u \id_{n \times n} - \widehat{L}) = T_{11}(u)
\]
as in the classical case. The proof is the same as in the classical case; one can check that we have never used commutativity of the matrix elements in the classical case apart from the basic properties of the determinant, which are satisfied in this case. Thus, we have the following result.
\begin{prop}
    The characteristic polynomial \eqref{quantumpolynom} is a generating function of the quantum commuting integrals of motion
    \[
    [\widehat{H}_i, \widehat{H}_j] = 0.
    \]
\end{prop}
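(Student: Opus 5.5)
The plan is to deduce commutativity from the identity $\det(u\,\id_{n\times n}-\widehat{L}) = T_{11}(u)$, with $T(u)=\Lax_n(u)\cdots\Lax_1(u)$ the quantum monodromy matrix built from \eqref{Laxquanopennonrel}. Take $Q=0$ in \eqref{transquant}, so that $\widehat{t}(u)=T_{11}(u)$. The $RLL$ relation \eqref{RLL} and the standard trace argument then give $[\widehat{t}(u_1),\widehat{t}(u_2)]=0$. This argument needs only $V=\operatorname{diag}(1,0)$, which trivially satisfies the $RLL$ relation, because the $R$-matrix commutes with $V\otimes V$. Once the identity is established, the coefficients $\HH_i$ in \eqref{quantumpolynom} coincide with those of $T_{11}(u)$ and therefore commute pairwise.

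To establish the identity, I will follow the classical proof of the theorem on spectral duality, keeping track of operator ordering. Let $A^{(k)}(u)$ denote $T_{11}$ for the chain of the first $k$ sites, and $C^{(k)}(u)$ denote $T_{21}$. Since $\Lax_k$ acts only on $x_k$, while $A^{(k-1)}$ and $C^{(k-1)}$ involve only $x_1,\dots,x_{k-1}$, the monodromy recursion reads
\[
A^{(k)}=(u-\p_k)A^{(k-1)}+x_k^{-1}(1-S\p_k)C^{(k-1)},\qquad C^{(k)}=x_k(1+S\p_k)A^{(k-1)}+S^2(u+\p_k)C^{(k-1)}.
\]
To eliminate $C$, I solve the first equation at step $k-1$ for $C^{(k-2)}$. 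This requires inverting $x_{k-1}^{-1}(1-S\p_{k-1})$, which I treat as a formal left inverse $(1-S\p_{k-1})^{-1}x_{k-1}$. Substituting into the second equation gives a three-term recurrence of the form \eqref{recurr}, with the coefficients written as ordered operators. These coefficients act only on the variables $x_{k-1},x_k$ and sit to the left of $W^{(k-1)}$ and $W^{(k-2)}$.

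For the determinant side, elements of $u-\widehat{L}$ in distinct rows and columns commute. The column-ordered determinant therefore behaves like a commutative one under expansion along the last row or column, and under adding a left multiple of one column to another, provided the multiplier commutes with the entries it meets. I will perform the same elimination as in the classical proof: subtract a suitable operator multiple of column $n-1$ from column $n$ (and do the same for rows) to clear the entries $(i,n)$ and $(n,i)$ for $i<n-1$. This works because the entries of the last column, $S^{n-i-1}\frac{x_i}{x_n}(S\p_n-1)$, are proportional in $i$ to those of column $n-1$ with a factor depending only on $x_{n-1},x_n,\p_n,\p_{n-1}$. Expanding the resulting matrix along the last row then yields the same ordered recurrence, and checking the cases $n=1,2$ directly completes the induction.

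I expect the main obstacle to be ordering. The multiplier used in the column operation involves $\p_{n-1}$ and $(1-S\p_{n-1})^{-1}$, and these do not commute with the $x_{n-1}$ appearing in column $n-1$. The cleanest fix I would try first is to write $\frac{x_i}{x_n}(S\p_n-1)=\frac{x_i}{x_{n-1}}(S\p_{n-1}-1)\cdot X$ with $X=(S\p_{n-1}-1)^{-1}\frac{x_{n-1}}{x_n}(S\p_n-1)$ acting on the right. Then $X$ commutes with every entry of rows $i<n-1$, since those entries involve only $x_i$ and $\p_i$ for $i\neq n-1,n$ in the relevant positions. If this becomes messy, the fallback is to verify the recurrence by working in an ordered-monomial basis and comparing normal-ordered coefficients of both sides.
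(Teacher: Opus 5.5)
Your route matches the paper's: reduce to $\det(u\,\id-\widehat L)=T_{11}(u)$, then use the $RLL$ trace argument with $V=\mathrm{diag}(1,0)$. That second half is fine. The gap is the premise you take from the paper's remark, that entries of $u\,\id-\widehat L$ in distinct rows and columns commute. They do not. The $(2,1)$ entry contains $\p_1$, which fails to commute with the $x_1$ in the $(1,2)$ entry $-\frac{x_1}{x_2}(S\p_2-1)$. As a result your column-ordered determinant is not $T_{11}(u)$, and the base case $n=2$ you plan to check already fails:
\begin{itemize}
\item With the printed lower entry $\frac{x_1}{x_2}(S\p_1+1)$, the off-diagonal term is quadratic in $x_1/x_2$. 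But $T_{11}(u)=(u-\p_1)(u-\p_2)+\frac{x_1}{x_2}(1+S\p_1)(1-S\p_2)$ is linear in $x_1/x_2$.
\item With the naive quantisation $S\p_1+1$ of the classical lower entry, one gets
\[
(u-\p_1)(u-\p_2)-(S\p_1+1)\frac{x_1}{x_2}(S\p_2-1)=T_{11}(u)+S\,\frac{x_1}{x_2}(1-S\p_2).
\]
\item With that same naive quantisation, row ordering repairs $n=2$ but fails at $n=3$, by $S^2\frac{x_1}{x_3}(1+S\p_1)(1-S\p_3)$.
\end{itemize}
Your planned column operations are unjustified for the same reason. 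The usual argument that a column-ordered determinant with a repeated column vanishes needs the entries of that column to commute, and $u-\p_{n-1}$ does not commute with the entries $\frac{x_i}{x_{n-1}}(\cdots)$ above it.

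What is missing is the right gauge; once you have it, no recurrence is needed. Conjugating the classical matrix \eqref{Laxnn} by $\mathrm{diag}(x_1^{-1},\dots,x_n^{-1})$ leaves $\det(u\,\id-L)$ unchanged and moves all coordinates into the strictly lower part. Quantise that matrix:
\[
\widehat L_{ij}=\delta_{ij}\p_i+\delta_{i<j}S^{j-i-1}(S\p_j-1)+\delta_{i>j}S^{i-j-1}\frac{x_j}{x_i}(S\p_j+1).
\]
These are the paper's lower entries, without the factor $\frac{x_i}{x_j}$ in the upper ones. Take a term $A_{\sigma(1)1}\cdots A_{\sigma(n)n}$ of the column-ordered determinant of $A=u\,\id-\widehat L$. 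The only momentum in column $j$ is $\p_j$. The coordinates in a later column $k>j$ are $x_k$ and $x_{\sigma(k)}$ with $\sigma(k)>k$, so they commute with $\p_j$. Hence every term is already normal ordered, with coordinates to the left of momenta. Every term of $T_{11}(u)$ is normal ordered as well, because each entry of \eqref{Laxquanopennonrel} is and different sites commute. Both sides are therefore the images, under the linear normal-ordering map $x^\alpha p^\beta\mapsto x^\alpha\p^\beta$, of the two sides of the classical identity \eqref{spectdual}, so they agree. Your $RLL$ argument then finishes the proof.
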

\begin{proof}
    The proof follows from the relation 
    \[
    \det(u \id_{n \times n} - \widehat{L}) = T_{11}(u),
    \]
    because $T_{11}(u)$ is a generating function of quantum commuting integrals of motion.
\end{proof}
%%%%%%%

We can also upgrade the van Diejen-type deformed Toda chain \eqref{BCHamclas} to a quantum level. Indeed, let us consider the quantum reflection equation\footnote{In principle, there is a second reflection equation required for integrability. We refer the reader to the original paper by Sklyanin \cite{sklyanin1988boundary}.}
\begin{equation}
    R_{12}(u_1 - u_2) K_1(u_1) R_{12}(u_1 + u_2) K_2(u_2) = K_2(u_2) R_{12}(u_1 + u_2) K_1(u_1) R_{12}(u_1 - u_2).
\end{equation}
We consider the same numerical solutions to the quantum reflection equation \eqref{reflsolv} as in the classical case. The only difference from the classical case is a small shift in the generating function of the commuting operators
\[
T(u) = K^+(u - 1/2) \Lax_n(u)\cdots \Lax_1(u) K^-(u + 1/2)\sigma \Lax_1^T(-u) \cdots \Lax_n^T(-u) \sigma^{-1} 
\]
with the same form of expansion \eqref{BChamgener} as in the classical case and $K^-(u)$ and $K^+(u)$ are the same as in the classical case \eqref{reflsolv}. The first non-trivial Hamiltonian up to a constant is given by
\begin{multline*}
\widehat{H}_1 = \sum\limits_{i = 1}^n (1+d_1 d_2 S^{4n}) \p_i^2 + \sum\limits_{i < j}\left(2S^{2(j - i - 1)}\frac{x_i}{x_j}(1 + S \p_i)(1 - S \p_j)\right.
\\
+ 2d_1 S^{2(2n - i - j)}x_i x_j (1 + S \p_i)(1 + S \p_j) + 2d_1 d_2 S^{2(2n + i - j - 1)} \frac{x_j}{x_i} (1 + S \p_j)(1 - S \p_i) 
\\
\left.+ 2d_2 S^{2(i + j - 2)}\frac{1}{x_i x_j} (1 - S \p_i)(1 - S \p_j) \right) + \sum\limits_{i = 1}^n\left( 2(a_1+a_2 d_1 S^{2n}) S^{2(n - i)} x_i (1 + S \p_i)\right. 
\\
\left.+ d_1 S^{4(n - i)}  (x_i (1 + S \p_i))^{2} + 2(a_2 + a_1 d_2 S^{2n}) S^{2(i - 1)} \frac{1}{x_i} (1 - S \p_i) \right.
\\
\left.+ d_2 S^{4(i - 1)}  (\frac{1}{x_i} (1 - S \p_i))^{2}  \right).
\end{multline*}
Thus, the only difference from the classical case is that correct ordering is required.

We finish this part of the section with a theorem which summarises most of the results of this section.
\begin{theorem}
    Classical and quantum deformed non-relativistic Toda systems are integrable with commuting integrals of motion being the expansion coefficients of the transfer matrix.
\end{theorem}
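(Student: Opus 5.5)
The theorem summarises the results of the section, so my plan is to assemble the earlier propositions and add the two ingredients that integrability needs beyond commutativity: that there are $n$ functionally independent integrals, and that the deformed Toda Hamiltonian belongs to the family.

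\emph{Commutativity.} In the classical case I will invoke the quadratic $r$-matrix structure \eqref{quadrrat2} for the local Lax matrices \eqref{Laxopennonrel}. For $i\neq j$ the entries of $L_i$ and $L_j$ Poisson commute, so the Leibniz rule shows that the monodromy $T(u)=L_n(u)\cdots L_1(u)$ again satisfies \eqref{quadrrat2}. The constant matrix $V=\mathrm{diag}(1,Q)$ satisfies the bracket trivially, because $[P_{12},V\otimes V]=0$. Hence $VT(u)$ also satisfies \eqref{quadrrat2}. Taking $\Tr_{12}$ then gives $\{t(u_1),t(u_2)\}=\Tr_{12}[r_{12},\cdot]=0$, exactly as in Section 2. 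This covers $Q=0$ (the open chain, $T_{11}$) and general $Q$ (the closed chain).

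The quantum case uses the $RLL$ relation \eqref{RLL}. Commuting entries at different sites lets it propagate to $T(u)$, and $R_{12}$ commutes with $V\otimes V$, so it extends to $VT(u)$. The trace argument from the quantum proposition of Section 2 then applies, using invertibility of $R_{12}(u)$ for generic $u$. For the van Diejen-type system I will cite Sklyanin's argument \cite{sklyanin1988boundary}. The numerical $K^\pm$ in \eqref{reflsolv} solve the (classical or quantum) reflection equations. Then $\sigma L^T(-u)\sigma^{-1}$, which is proportional to $L(u)^{-1}$ up to a scalar function, satisfies the dual relation. Consequently the double-row monodromy satisfies the reflection algebra, and its trace commutes.

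\emph{Containing the Hamiltonian.} I will check that the deformed Toda Hamiltonian is in the family: $H=\tfrac12H_1^2-H_2$ by \eqref{Hamclass1}, and the same holds for its quantum counterpart \eqref{quantHam22} at $Q=0$. To obtain these formulas, I will expand $T_{11}(u)$ by induction on $n$, tracking the $C^{(i)}=T_{21}$ recursion from the proof of the spectral duality theorem. This expansion produces the geometric factors $S^{2(j-i-1)}$.

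\emph{Independence.} This is the main obstacle. My plan is to degenerate the coordinates as $x_i/x_j\to 0$ for $i<j$, i.e. push the $x$'s into a Weyl chamber. In this limit every potential term vanishes and $H_k\to e_k(p_1,\dots,p_n)$, the elementary symmetric polynomials, which are functionally independent. Because independence is an open condition, the Jacobian of $(H_1,\dots,H_n)$ has rank $n$ on an open dense set.

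For the periodic case, the leading coefficients carry factors such as $(1\pm QS^{2n})$. So I restrict to generic $Q$, for which these factors are nonzero, and apply the same limit. One subtlety is that in $t(u)$ the $Q T_{22}$ term contributes $QS^{2n}u^n$. I handle this by normalising by the leading coefficient $1+(-1)^nQS^{2n}$.

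In the quantum case, independence follows from the principal symbols, which are the classical Hamiltonians. Finally, I will state that in the boundary case $H_0$ and $H_{n+1}$ are constants, leaving the $n$ nontrivial integrals $H_1,\dots,H_n$. Their independence follows from the same limit, in which they reduce to symmetric functions of the $p_i^2$.
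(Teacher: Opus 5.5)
Your commutativity argument is exactly the paper's: the local $r$-matrix/$RLL$ relations propagate to $T(u)$, the twist $V=\mathrm{diag}(1,Q)$ is compatible with them, the trace of a commutator vanishes, and Sklyanin's argument covers the boundary case.

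\textbf{Independence: where you and the paper differ.} The paper deforms from $S=0$. There the integrals are those of the ordinary Toda chains, whose independence is known, and the Jacobian is polynomial in $S$. You instead degenerate the coordinates to the free regime. For the open chain this works, and is slightly stronger than the paper's argument. Every term of $T_{11}(u)$ is a product of ratios $x_i/x_j$ with $i<j$, so the $H_k$ are polynomials in $p$ and $y_i=x_i/x_{i+1}$ and reduce to $e_k(p)$ at $y=0$. This gives independence for every $S$, not only generic $S$.

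\textbf{The gap.} The same limit fails for the closed chain and for the van Diejen-type system. In $QT_{22}(u)$ the excursions leave state $2$ at site $i$ (factor $(1-Sp_i)x_i^{-1}$) and return at a site $j>i$ (factor $(1+Sp_j)x_j$). They therefore produce $x_j/x_i$ with $i<j$; see the last sum in \eqref{classHam22}, which contains the usual affine term $Qx_n/x_1$. As $x_i/x_j\to 0$ these terms blow up, so $H^{\mathrm{per}}_k$ does not tend to $(1+(-1)^kQS^{2n})e_k(p)$, and restricting to generic $Q$ does not help. This is the standard affine obstruction: $\frac{x_1}{x_2}\cdots\frac{x_{n-1}}{x_n}\cdot\frac{x_n}{x_1}=1$, so no choice of $x$ kills all potential terms. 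Likewise, in \eqref{BCHamclas} the pairs $x_ix_j$ and $1/(x_ix_j)$, $x_i$ and $1/x_i$, and $x_i/x_j$ and $d_1d_2\,x_j/x_i$ cannot all be sent to zero together. So the claimed reduction of the boundary integrals to symmetric functions of the $p_i^2$ also fails.

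\textbf{How to repair it.} Use a parameter deformation instead of a coordinate limit.
\begin{itemize}
\item Closed chain: $t(u)$ is affine in $Q$, so at a fixed point $\det(\partial H^{\mathrm{per}}_k/\partial p_l)$ is a polynomial in $Q$. At $Q=0$ it equals the open-chain Jacobian, which you have shown is nonzero at generic points. Hence it is nonzero for all but finitely many $Q$, and by analyticity on an open dense set.
\item Boundary system (or either case): use the paper's deformation from $S=0$, where independence of the undeformed van Diejen-type Toda integrals is known.
\end{itemize}

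\textbf{Two smaller slips.} The leading coefficient of $t(u)$ is $1+QS^{2n}$, because the $u^n$ term of $T_{22}$ comes from $\prod_i S^2(u+p_i)$; it is not $1+(-1)^nQS^{2n}$. Also, for $2\times 2$ matrices $\sigma A^T\sigma^{-1}=\operatorname{adj}A$, so $\sigma L^T(-u)\sigma^{-1}$ is proportional to $L(-u)^{-1}$, not $L(u)^{-1}$.
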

\begin{proof}
    The (Poisson) commutativity of the integrals of motion in both classical and quantum cases is a direct consequence of the corresponding $R$-matrix structures. The independency of the integrals of motion follows from the independency in the non-deformed case.
\end{proof}
%%%%%%%%%%%%%%%%%%%%%%%%%
\subsection{Algebraic Bethe ansatz}
One of the key features of the quantum non-deformed Toda systems is that they are \textit{not} treatable by the standard Bethe ansatz method; we refer the reader to \cite{slavnov2018algebraic} for a detailed and pedagogical lecture notes on the algebraic Bethe ansatz. Indeed, the Bethe ansatz solution requires the existence of a pseudo-vacuum, an analogue of a highest/lowest weight vector in the representation theory. The pseudo-vacuum has to be a simple vector in the quantum space of the system, which is annihilated by one of the off-diagonal elements of the monodromy matrix, and it has to be an eigenvector for both of the diagonal operators $T_{11}(u)$ and $T_{22}(u)$. It is easy to see that this is impossible for the non-deformed Toda chain. Thus, different techniques were invented to study Toda systems, such as the $TQ$ relation and the method of separation of variables \cite{pasquier1992periodic}, \cite{sklyanin2005quantum}, see also \cite{givental1996stationary}, \cite{kharchev2001integral} for integral representation of the eigenfunctions. 

However, for deformed quantum Toda systems, the algebraic Bethe ansatz can be applied, unless the deformation parameters are set to zero. We will work with quantum Lax operators \eqref{Laxquanopennonrel}, but with every site equipped with a separate deformation parameter $w_i \in \mathbb{R}$
\begin{equation}\label{LaxBethe}
\Lax_i(u) = \begin{pmatrix}
    u - x_i \partial_{x_i} & x_i^{-1} (1 - w_i x_i \partial_{x_i}) \\
    x_i (1 + w_i x_i \partial_{x_i}) & w_i^2(u + x_i \partial_{x_i})
    \end{pmatrix}.
\end{equation}
Indeed, suppose that the deformation parameters $w_i \neq 0$ are all not equal to zero and consider the function
\begin{equation}\label{pseudo-vac}
\ket{\bf{w}} = \prod\limits_{i = 1}^n x_i^{-w_i^{-1}}.
\end{equation}
It is immediate that this (formal) function is an eigenfunction for all diagonal operators in the local Lax operators \eqref{LaxBethe} and is annihilated by all left-bottom operators.
\begin{remark}
    From the point of view of the deformed Toda system, it is clear why the Bethe ansatz method does not work in the non-deformed case, which corresponds to $w_i = 0$. Indeed, the pseudo-vacuum state \eqref{pseudo-vac} disappears when the deformation parameters are sent to zero.
\end{remark}
\noindent Thus, we have the following simple proposition.
\begin{prop}
    The pseudo-vacuum vector $\ket{\bf{w}}$ is an eigenvector for operators $T_{11}(u)$ and $T_{22}(u)$ 
    \begin{gather*}
    T_{11}(u)\ket{\bf{w}} = a(u)\ket{\bf{w}} =  \prod\limits_{i = 1}^n(u + w_i^{-1}) \ket{\bf{w}}, 
    \\
    T_{22}(u) \ket{\bf{w}} = d(u) \ket{\bf{w}} = \prod\limits_{i = 1}^n w_i(w_i u - 1) \ket{\bf{w}} 
    \end{gather*}
    and is annihilated by the operator $T_{21}(u)$
    \[
    T_{21}(u) \ket{\bf{w}} = 0.
    \]
\end{prop}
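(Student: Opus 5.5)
The plan is to reduce everything to a one-site computation and then induct along the monodromy chain, using that operators attached to different sites commute.

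\emph{Step 1 (local action).} Write $\ket{\bf w}=\prod_i \ket{w_i}$ with $\ket{w_i}=x_i^{-w_i^{-1}}$. Since $x_i\partial_{x_i}\ket{w_i}=-w_i^{-1}\ket{w_i}$, the entries of \eqref{LaxBethe} act as follows:
\begin{itemize}
\item $(\Lax_i)_{11}$ acts on $\ket{w_i}$ as multiplication by $u+w_i^{-1}$;
\item $(\Lax_i)_{22}$ acts as multiplication by $w_i^2(u-w_i^{-1})=w_i(w_iu-1)$;
\item $(\Lax_i)_{21}\ket{w_i}=x_i(1+w_i\cdot(-w_i^{-1}))\ket{w_i}=0$.
\end{itemize}
The entry $(\Lax_i)_{12}$ does not preserve $\ket{w_i}$; it maps it to $2x_i^{-1}\ket{w_i}$. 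This causes no trouble, because that entry never has to act on the vacuum in the argument below. Each entry of $\Lax_i$ acts only on the variable $x_i$, so entries of $\Lax_i$ and $\Lax_j$ commute for $i\neq j$.

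\emph{Step 2 (induction).} Let $T^{(k)}(u)=\Lax_k(u)\cdots\Lax_1(u)$. Its entries involve only $x_1,\dots,x_k$. The claim to prove by induction on $k$ is:
\[
T^{(k)}_{11}\ket{\bf w}=\prod_{i\le k}(u+w_i^{-1})\ket{\bf w},\qquad
T^{(k)}_{22}\ket{\bf w}=\prod_{i\le k}w_i(w_iu-1)\ket{\bf w},\qquad
T^{(k)}_{21}\ket{\bf w}=0 .
\]
From $T^{(k)}=\Lax_kT^{(k-1)}$ we get
\begin{gather*}
T^{(k)}_{11}=(\Lax_k)_{11}T^{(k-1)}_{11}+(\Lax_k)_{12}T^{(k-1)}_{21},\qquad
T^{(k)}_{21}=(\Lax_k)_{21}T^{(k-1)}_{11}+(\Lax_k)_{22}T^{(k-1)}_{21},\\
T^{(k)}_{22}=(\Lax_k)_{21}T^{(k-1)}_{12}+(\Lax_k)_{22}T^{(k-1)}_{22}.
\end{gather*}
In every product the operator $T^{(k-1)}$ acts first.

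\emph{Step 3 (applying the hypothesis).} In the formulae for $T^{(k)}_{11}$ and $T^{(k)}_{21}$, the terms containing $T^{(k-1)}_{21}$ vanish on $\ket{\bf w}$ by the induction hypothesis. The term $T^{(k-1)}_{11}$ produces the scalar $\prod_{i<k}(u+w_i^{-1})$. After that:
\begin{itemize}
\item $(\Lax_k)_{11}$ contributes the factor $u+w_k^{-1}$;
\item $(\Lax_k)_{21}$ annihilates the vacuum by Step 1.
\end{itemize}
This gives the stated formulae for $T^{(k)}_{11}$ and $T^{(k)}_{21}$.

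The only point needing care is the term $(\Lax_k)_{21}T^{(k-1)}_{12}$ in $T^{(k)}_{22}$. Here $T^{(k-1)}_{12}\ket{\bf w}$ is not proportional to $\ket{\bf w}$. However, $T^{(k-1)}_{12}$ involves only $x_1,\dots,x_{k-1}$ and so commutes with $(\Lax_k)_{21}$. Hence
\[
(\Lax_k)_{21}T^{(k-1)}_{12}\ket{\bf w}=T^{(k-1)}_{12}(\Lax_k)_{21}\ket{\bf w}=0 .
\]
The remaining term $(\Lax_k)_{22}T^{(k-1)}_{22}$ gives the stated product for $T^{(k)}_{22}$.

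Taking $k=n$ yields the proposition with $a(u)=\prod_{i=1}^n(u+w_i^{-1})$ and $d(u)=\prod_{i=1}^n w_i(w_iu-1)$. The main obstacle is exactly this commutation step in the $T_{22}$ recursion; everything else is routine.
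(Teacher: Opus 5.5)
Your proof is correct and follows the same route as the paper: check the eigenvalue and annihilation properties for each local Lax operator, then induct along $T^{(k)}=\Lax_k T^{(k-1)}$. Your treatment of the cross term $(\Lax_k)_{21}T^{(k-1)}_{12}$ in $T^{(k)}_{22}$, using that operators at different sites commute, supplies exactly the detail the paper's one-line ``by induction'' leaves implicit.
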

\begin{proof}
    The statement is true for every local Lax operator. It easily follows for the monodromy matrix by induction.
\end{proof}
The main idea of the algebraic Bethe ansatz is to use the operator $T_{12}(u)$ as a creation operator and to generate the eigenvectors for the transfer matrix by successive action on the pseudo-vacuum vector. Consider the off-shell Bethe vector
\begin{equation}\label{off-shell}
    \ket{\mu_1, \ldots, \mu_m} = T_{12}(\mu_1) \cdots T_{12}(\mu_m) \ket{\bf{w}}.
\end{equation}
The parameters $\mu_i$ are called Bethe roots. The theorem describes when the off-shell Bethe vector is an eigenvector of the transfer matrix.
\begin{theorem}
    The off-shell Bethe vector \eqref{off-shell} is an eigenvector of the transfer matrix \eqref{quanttransfer} with an eigenvalue
    \[
    \Lambda(u) = a(u) \prod\limits_{j = 1}^m \frac{u - \mu_j + 1}{u - \mu_j} + Q d(u) \prod\limits_{j = 1}^m \frac{u - \mu_j - 1}{u - \mu_j}
    \]
    provided that the all the Bethe roots are distinct and the following system of Bethe equations are satisfied
    \[
    \frac{a(\mu_i)}{d(\mu_i)} = Q \prod\limits_{j \neq i}^m \frac{\mu_i - \mu_j - 1}{\mu_i - \mu_j + 1}.
    \]
\end{theorem}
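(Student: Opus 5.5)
This is the standard algebraic Bethe ansatz for the rational $R$-matrix $R_{12}(u)=\id-P_{12}/u$. The plan is to extract from the $RLL$ relation \eqref{RLL} for the monodromy matrix \eqref{transquant} the exchange relations between the diagonal entries and the creation operator $T_{12}$. We then move $T_{11}(u)$ and $T_{22}(u)$ through the product in \eqref{off-shell}, and finally use the Proposition on the pseudo-vacuum $\ket{\bf w}$.

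\emph{Step 1: exchange relations.} Writing \eqref{RLL} componentwise in $E_{ab}\otimes E_{cd}$, the relevant entries give:
\begin{itemize}
\item $[T_{12}(u),T_{12}(v)]=0$;
\item $T_{11}(u)T_{12}(v)=\frac{u-v+1}{u-v}T_{12}(v)T_{11}(u)-\frac{1}{u-v}T_{12}(u)T_{11}(v)$;
\item $T_{22}(u)T_{12}(v)=\frac{u-v-1}{u-v}T_{12}(v)T_{22}(u)+\frac{1}{u-v}T_{12}(u)T_{22}(v)$.
\end{itemize}
The exact signs and the shift $\pm1$ are dictated by the normalisation $R=\id-P/u$. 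I would fix them by matching them against the claimed eigenvalue $\Lambda(u)$, whose factors $\frac{u-\mu_j\pm1}{u-\mu_j}$ must be the "wanted" coefficients.

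\emph{Step 2: commuting $T_{11}$ and $T_{22}$ to the right.} Applying the relations repeatedly to $T_{12}(\mu_1)\cdots T_{12}(\mu_m)\ket{\bf w}$ and using the Proposition ($T_{11}\ket{\bf w}=a\ket{\bf w}$, $T_{22}\ket{\bf w}=d\ket{\bf w}$), we get
\[
\widehat t(u)\ket{\mu_1,\ldots,\mu_m}=\Lambda(u)\ket{\mu_1,\ldots,\mu_m}+\sum_{k=1}^m M_k(u)\,T_{12}(u)\prod_{j\neq k}T_{12}(\mu_j)\ket{\bf w}.
\]
The first term is the wanted one, coming from always choosing the first term of each exchange relation. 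The "unwanted" terms are those in which some $\mu_k$ is replaced by $u$. Since the $T_{12}$ commute, the vector is symmetric in the $\mu_j$. It therefore suffices to compute $M_1$, by first moving $T_{12}(\mu_1)$ to the leftmost position and then permuting.

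\emph{Step 3: the coefficient $M_1$.} For the unwanted term with $k=1$, the first exchange must use the second term, which swaps $u$ and $\mu_1$. After that, only first terms may be taken, with $u$ replaced by $\mu_1$. This gives
\[
M_1(u)=\frac{1}{u-\mu_1}\Big(-a(\mu_1)\prod_{j\ne1}\frac{\mu_1-\mu_j+1}{\mu_1-\mu_j}+Q\,d(\mu_1)\prod_{j\ne1}\frac{\mu_1-\mu_j-1}{\mu_1-\mu_j}\Big).
\]
Setting $M_k=0$ for all $k$ gives exactly the stated Bethe equations. Distinctness of the roots is needed so that the denominators $\mu_i-\mu_j$ do not vanish.

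The main obstacle is the bookkeeping in Step 3: justifying that all contributions which end with $T_{12}(u)$ in place of $T_{12}(\mu_k)$ collect into this single coefficient. I would handle this with the standard symmetry argument, which uses commutativity of the $T_{12}$. If needed, I would check the formula by induction on $m$ as a backup.
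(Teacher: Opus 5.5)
Your proposal is correct and follows essentially the same route as the paper: the same two exchange relations extracted from \eqref{RLL}, the pseudo-vacuum eigenvalues $a(u)$ and $d(u)$, and the same unwanted-term coefficients, whose vanishing gives exactly the stated Bethe equations. The relations you wrote are the correct ones for $R_{12}(u)=\id-P_{12}/u$, so no sign-fixing ``by matching against $\Lambda(u)$'' is needed. Your symmetry argument (with induction on $m$ as a backup) spells out what the paper compresses into ``can be easily computed''.
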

\begin{proof}
    The proof is standard. We use the relations between the diagonal and off-diagonal elements of the monodromy matrix, which are encoded in the $RLL$ relation \eqref{RLL}
    \begin{gather*}
        T_{11}(u_1) T_{12}(u_2) = -\frac{1}{u_1 - u_2} T_{12}(u_1) T_{11}(u_2) + \frac{u_1 - u_2 + 1}{u_1 - u_2} T_{12}(u_2) T_{11}(u_1), 
        \\
        T_{22}(u_1) T_{12}(u_2) = \frac{1}{u_1 - u_2} T_{12}(u_1) T_{22}(u_2) + \frac{u_1 - u_2 - 1}{u_1 - u_2} T_{12}(u_2) T_{22}(u_1).
    \end{gather*}
    The action of the diagonal elements of the monodromy matrix on the off-shell Bethe vector can be easily computed
    \begin{multline*}
        T_{11}(u) \ket{\mu_1, \ldots, \mu_m} = a(u) \prod\limits_{j = 1}^m \frac{u - \mu_j + 1}{u - \mu_j} \ket{\mu_1, \ldots, \mu_m} 
        \\
        + \sum\limits_{i = 1}^m \frac{a(\mu_i)}{\mu_i - u} \prod\limits_{j \neq i}^m \frac{\mu_i - \mu_j + 1}{\mu_i - \mu_j} \ket{\mu_1, \ldots, \mu_{i - 1},u,\mu_{i + 1},\ldots, \mu_m}  
    \end{multline*}
    and
    \begin{multline*}
        T_{22}(u) \ket{\mu_1, \ldots, \mu_m} = d(u) \prod\limits_{j = 1}^m \frac{u - \mu_j - 1}{u - \mu_j} \ket{\mu_1, \ldots, \mu_m} 
        \\
        + \sum\limits_{i = 1}^m \frac{d(\mu_i)}{u - \mu_i} \prod\limits_{j \neq i}^m \frac{\mu_i - \mu_j - 1}{\mu_i - \mu_j} \ket{\mu_1, \ldots, \mu_{i - 1},u,\mu_{i + 1},\ldots, \mu_m}.
    \end{multline*}
    We eliminate the unwanted terms by imposing the Bethe equations
    \[
    \frac{a(\mu_i)}{d(\mu_i)} = Q \prod\limits_{j \neq i}^m \frac{\mu_i - \mu_j - 1}{\mu_i - \mu_j + 1}
    \]
    and we are left with an eigenvalue for the transfer matrix
    \[
    \Lambda(u) = a(u) \prod\limits_{j = 1}^m \frac{u - \mu_j + 1}{u - \mu_j} + Q d(u) \prod\limits_{j = 1}^m \frac{u - \mu_j - 1}{u - \mu_j}.
    \]
\end{proof}
%%%%%%%%%%%%%%%%%%%%%%%%%
\section{Deformed Relativistic Toda systems}
In this section, we show that similarly to the non-relativistic case described in the previous section one can deform the relativistic Toda systems on both classical and quantum levels. The main problem is to deform the local Lax operator; all the other steps are standard in the theory of integrable systems, and thus we omit most of the details. The rational classical and quantum $R$-matrices and corresponding structures should be changed to the trigonometric ones.
%%%%%%%%%%%%%%%%%%%%%%%%%
\subsection{Classical deformed relativistic Toda system}
Consider the manifold 
\[
(\mathbb{R}^{\times})^{2n} = \{p_1, \ldots, p_n, x_1, \ldots, x_n | \; x_i,p_i\neq 0\}
\]
with the Poisson structure given by
\[
\{p_i, x_j \} =\delta_{ij} p_i x_i.
\]
We will work with the classical trigonometric $r$-matrix 
\begin{equation}\label{trigclassr}
r(u) = \begin{pmatrix}
    0 & 0 & 0 & 0 \\
    0 & \frac{1}{1-u} & \frac{u}{u - 1} & 0 \\
    0 & \frac{1}{u-1} & \frac{u}{1 - u} & 0 \\
    0 & 0 & 0 & 0
\end{pmatrix},
\end{equation}
which satisfies the classical Yang--Baxter equation
\[
[r_{12}(u_1), r_{13}(u_1 u_2)] + [r_{12}(u_1),r_{23}(u_2)] + [r_{13}(u_1 u_2), r_{23}(u_2)] = 0.
\]
We will study the Lax operator given by\footnote{As in the non-relativistic case, we can assign to each local Lax operator its own parameters $a_i$, $b_i$, $c_i$ and $d_i$.}
\begin{equation}\label{relLax}
L_i(u) = \begin{pmatrix}
    u - b p_i & u x_i^{-1}(c + d p_i - ab p_i^2)  \\
    x_i & c +a u p_i
\end{pmatrix},
\end{equation}
where $a$, $b$, $c$ and $d$ are some real numbers; we will further set $b = 1$ for simplicity. The following proposition describes the Poisson brackets between the matrix elements of the Lax matrix \eqref{relLax} in terms of the classical $r$-matrix structure.
\begin{prop}
    The Lax matrix \eqref{relLax} satisfies the following relation on the Poisson brackets of its matrix elements
    \[
    \{L_i^{(1)}(u_1), L_i^{(2)}(u_2) \} = [r_{12}(u_1/u_2), L_i^{(1)}(u_1) L_i^{(2)}(u_2)].
    \]
\end{prop}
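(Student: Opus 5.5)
The plan is to verify the relation entry by entry, exactly as in the proof of the non-relativistic Proposition for \eqref{Laxnonrelgeneral}. This time I would run it in reverse: start from a Toda-type ansatz
\[
L_i(u) = \begin{pmatrix} u - f_1(p_i) & u\, x_i^{-1} f_2(p_i) \\ x_i & f_3(p_i) + u f_4(p_i) \end{pmatrix},
\]
derive the functional equations that the $r$-matrix relation imposes on the $f_k$, and check that $f_1 = p_i$ (using $b=1$), $f_2 = c + d p_i - a p_i^2$, $f_3 = c$, $f_4 = a p_i$ solve them.

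\textbf{Left-hand side.} With $\{p_i,x_i\} = p_i x_i$, the only brackets needed are
\[
\{f(p_i), x_i^{\pm1}\} = \pm p_i f'(p_i)\, x_i^{\pm1}.
\]
Since $\{p_i,p_i\}=0$ and $\{x_i,x_i\}=0$, the left-hand side $\{L^{(1)}_i(u_1), L^{(2)}_i(u_2)\}$ is nonzero only in entries pairing a $p$-dependent element with an $x^{\pm1}$-dependent element. Each such entry is a monomial $x_i^{k}$, with $k \in \{-1,0,1\}$, times a polynomial in $p_i, u_1, u_2$. In particular no denominators appear on the left.

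\textbf{Right-hand side.} The matrix \eqref{trigclassr} is supported on $E_{12}\otimes E_{21}$, $E_{21}\otimes E_{12}$, $E_{11}\otimes E_{22}$ and $E_{22}\otimes E_{11}$. I would write $r_{12}(u_1/u_2)$ in these terms and compute the commutator $[r_{12}(u_1/u_2),\, L^{(1)}_i(u_1) L^{(2)}_i(u_2)]$ in the $16$ components $E_{ab}\otimes E_{cd}$. Each component carries the denominator $u_1 - u_2$, since $1/(1-u)$ and $u/(u-1)$ evaluated at $u = u_1/u_2$ become $u_2/(u_2-u_1)$ and $u_1/(u_1-u_2)$. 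Grading by the power of $x_i$ cuts the work down: the $x$-degree is conserved on both sides. Components of degree $\pm 2$ must vanish identically on the right, and I would check that they do.

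\textbf{Matching.} Equating the remaining components gives differential relations for the $f_k$. I expect them to be the trigonometric analogues of $f_1'=1$, $f_4=f_3'$, etc. Concretely, I expect
\[
p f_1' = f_1, \qquad p f_4' = f_4, \qquad f_3' = 0,
\]
together with one relation determining $p f_2' - f_2$ in terms of $f_1, f_3, f_4$. These are solved by the stated functions: $d p$ is the homogeneous kernel of $p\partial_p - 1$, $c$ comes from $f_3$, and $-ab p^2$ comes from $f_1 f_4$. Since the paper already states the result, in practice it is enough to substitute the specific $L_i$ from \eqref{relLax} and verify the identities.

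\textbf{Main obstacle.} The delicate components are the mixed ones: the $E_{11}\otimes E_{22}$ and $E_{22}\otimes E_{11}$ components of degree $0$, and the components of degree $\pm1$ (for example $E_{12}\otimes E_{11}$). In these the two nonzero $r$-coefficients combine with different powers of $u_1, u_2$. Here one must show that, after multiplying by $u_1 - u_2$, the numerator on the right is divisible by $u_1 - u_2$, with quotient equal to the polynomial on the left. I would handle these by expanding the numerator as a polynomial in $u_1, u_2$ and using the factor $u$ in the $(1,2)$ entry of $L_i$ to produce the needed factor $u_1 - u_2$. I expect this to be exactly where the specific form $c + d p - a p^2$ is forced.
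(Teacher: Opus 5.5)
Your proposal is correct and is essentially the paper's argument: the paper only says the proof is the same as in the non-relativistic case, namely a Toda-type ansatz, comparison of both sides, and solution of the resulting equations, and the relations you predict, $p f_1' = f_1$, $f_3' = 0$, $p f_4' = f_4$ and $p f_2' - f_2 = -(f_3 + f_1 f_4)$, are exactly what the computation gives. One small correction to your forecast: the $E_{11}\otimes E_{22}$ and $E_{22}\otimes E_{11}$ components vanish identically because of the factor $u$ in the $(1,2)$ entry, and it is the $E_{12}\otimes E_{21}$ and $E_{21}\otimes E_{12}$ components that force $f_2 = c + dp - ap^2$.
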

\begin{proof}
    The proof is the same as in the non-relativistic case.
\end{proof}
\noindent The monodromy matrix is constructed in the usual manner
\[
T(u) = L_n(u) \cdots L_1(u).
\]
The generating function of Poisson commuting Hamiltonians of the deformed open relativistic Toda chain is given by
\[
t(u) = T_{11}(u) = \sum\limits_{i = 0}^n (-1)^i H_i u^{n-i}.
\]
The first non-trivial Hamiltonian is given by 
\begin{equation}\label{relatcalopenH1}
H_1 = \sum\limits_{i = 1}^np_i\left(1 - \sum\limits_{j<i}^n a^{i-j-1} (\frac{c}{p_j} + d - a p_j)p_{j + 1} \cdots p_{i-1} \frac{x_j}{x_i} \right).
\end{equation}
We can also consider a periodic version by choosing a deformed trace of the monodromy matrix as a generating function for the Poisson commuting Hamiltonians.
\[
t(u) = T_{11}(u) + Q T_{22}(u) = \sum\limits_{i = 1}^n (-1)^i H^{\rm per}_i u^{n-i}.
\]
The simplest non-trivial Hamiltonian reads as
\begin{multline}\label{relatcalclosH1}
H_1^{\rm per} = \sum\limits_{i = 1}^np_i\left(1 - \sum\limits_{j<i}^n a^{i-j-1} (\frac{c}{p_i} + d - a p_i)p_{j + 1} \cdots p_{i-1} \frac{x_j}{x_i} \right. 
\\
\left.- Q \sum\limits_{j>i}^n a^{n-j+i-1} (\frac{c}{p_i} + d - a p_i) p_{j+1} \cdots p_n p_1 \cdots p_{i-1} \frac{x_j}{x_i} \right) - Q c a^{n-1} \sum\limits_{i = 1}^n \frac{\mathbf{P}}{p_i},
\end{multline}
where $\mathbf{P} = p_1 \cdots p_n$ is the multiplicative version of the total momentum. 
%%%%%%%%%%%%%%%%%%%%%%%%%
\subsection{Quantum relativistic deformed Toda system}
Let us introduce the algebra of $q$-difference operators in $n$ variables. We denote by $(\mathbb{R}^{\times})^n = \{x_1, \ldots, x_n\in \mathbb{R} \; | \; x_i \neq 0 \}$ the space of coordinates. Denote by 
\[
(\G_i f)(x_1, \ldots, x_n) = f(x_1, \ldots, q x_i, \ldots, x_n)
\]
the shift operator in the $i$-th variable. The shift operators together with the multiplication operators form $q$-Weyl pairs
\begin{equation}\label{qWeylpair}
\G_i x_j = q^{\delta_{ij}} x_j \G_i.
\end{equation}
Consider the local deformed Lax operator
\begin{equation}\label{RelquantL}
\Lax_i(u) = \begin{pmatrix}
    u - \G_i & u x_i^{-1} (c + d \G_i -a q^{-1} \G_i^2) \\
    x_i & c + u a \G_i
\end{pmatrix},
\end{equation}
where $a$, $c$ and $d$ are some real numers, setting $a = c = 0$ and $d = 1$ we obtain the quantum Lax matrix for the non-deformed quantum relativistic Toda system. 
\begin{remark}
    For any rational function $f(\Gamma_i)$ in difference operators, the transformation $x_i \mapsto f(\Gamma_i) x_i$, which preserves the commutation relations \eqref{qWeylpair}, can be applied to make the definition more symmetric with respect to the off-diagonal elements of the Lax matrix.
\end{remark}
\noindent We also introduce the following trigonometric quantum $R$-matrix 
\begin{equation}\label{trigR}
R(u) = \begin{pmatrix}
    q - u & 0 & 0 & 0\\
    0 & 1-u & u(q-1) & 0\\
    0 & q-1 & q(1-u) & 0\\
    0 & 0 & 0 & q-u
\end{pmatrix},
\end{equation}
which satisfies the quantum Yang--Baxter equation \eqref{QYBE} in the multiplicative variables. The following proposition is a relativistic analogue of the relation \eqref{Laxquanopennonrel}.
\begin{prop}
    The deformed local Lax operator \eqref{RelquantL} satisfies the exchange relation
    \begin{equation}\label{RLLtrig}
    R_{12}(u_1/u_2) \Lax_i^{(1)}(u_1)\Lax^{(2)}_i(u_2)= \Lax^{(2)}_i(u_2)\Lax^{(1)}_i(u_1) R_{12}(u_1/u_2)
    \end{equation}
    with the trigonometric $R$-matrix given by \eqref{trigR}.
\end{prop}
\begin{proof}
    Direct computation.
\end{proof}
\noindent We construct the monodromy matrix
\[
T(u) = \Lax_n(u) \cdots \Lax_1(u) = \begin{pmatrix}
    T_{11}(u) & T_{12}(u) \\
    T_{21}(u) & T_{22}(u)
\end{pmatrix}
\]
and transfer matrix
\[
\widehat{t}(u) = T_{11}(u), \quad \widehat{t}(u) = \sum\limits_{i = 0}^n (-1)^i \widehat{H}_{i} u^{n - i}.
\]
The exchange relation \eqref{RLLtrig} ensures the integrability of the model
\[
[\widehat{H}_i, \widehat{H}_j] = 0.
\]
We can explicitly compute the first operator of the deformed $q$-Toda system
\begin{equation}\label{relatopenquanH1}
\widehat{H}_1 = \sum\limits_{i = 1}^n \G_i - \sum\limits_{i = 1}^n\sum\limits_{j < i}^n a^{i - j - 1} \frac{x_j}{x_i} (c + d \G_i - a q^{-1} \G_i^2) \G_{j+1} \cdots \G_{i - 1}.
\end{equation}
Setting the parameters $a = c = 0$ and $d = 1$ we obtain the non-deformed Hamiltonian of the quantum relativistic Toda system. If we compute the generalised trace of the monodromy matrix, we obtain the periodic deformed $q$-Toda system
\[
\widehat{t}^{per}(u) = T_{11}(u) + Q T_{22}(u) = \sum\limits_{i = 0}^n (-1)^i \widehat{H}^{per}_{i} u^{n - i},
\]
where
\begin{multline}\label{relatquanclosH1}
\widehat{H}_1^{per} =  \sum\limits_{i = 1}^n \G_i - \sum\limits_{i = 1}^n\sum\limits_{j < i}^n a^{i - j - 1} \frac{x_j}{x_i} (c + d \G_i - a q^{-1} \G_i^2) \G_{j+1} \cdots \G_{i - 1}
\\
- Q \sum\limits_{i = 1}^n \sum\limits_{j > i}^n a^{n - j + i - 1} \frac{x_j}{x_i} (c  + d \G_i - a q^{-1} \G_i^2) \G_{j + 1} \cdots \G_n \G_1 \cdots \G_{i - 1} - Q c a^{n - 1} \G_1 \cdots \G_n \sum\limits_{i = 1}^n \G_i^{-1}
\end{multline}
is the first non-trivial Hamiltonian operator. Again, setting $a = c = 0$ and $d = 1$ we obtain the quantum affine relativistic Toda operator. We can summarise the results of this section in the following theorem.
\begin{theorem}
    The open and closed deformed relativistic Toda systems \eqref{relatcalopenH1}, \eqref{relatopenquanH1}, \eqref{relatcalclosH1} and \eqref{relatquanclosH1} are integrable with the (Poisson) commuting Hamiltonians given by the corresponding transfer matrices. 
\end{theorem}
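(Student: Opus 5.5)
We prove commutativity of the coefficients of the transfer matrices, and then independence, following the pattern of the non-relativistic theorem. The argument rests on the local exchange relations already stated: the classical bracket $\{L_i^{(1)}(u_1),L_i^{(2)}(u_2)\}=[r_{12}(u_1/u_2),L_i^{(1)}(u_1)L_i^{(2)}(u_2)]$ for \eqref{relLax}, and the quantum relation \eqref{RLLtrig} for \eqref{RelquantL}.

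\textbf{Step 1: from local to global exchange relations.} Matrix elements of $L_i$ and $L_j$ with $i\neq j$ Poisson commute (respectively commute), since they involve disjoint variables. The standard coproduct argument then shows that the monodromy $T(u)=L_n(u)\cdots L_1(u)$ satisfies the same quadratic relation. Classically we apply the Leibniz rule and telescope the commutators; quantum mechanically we move $R_{12}(u_1/u_2)$ through the product site by site.

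\textbf{Step 2: the twist.} We insert $V=\mathrm{diag}(1,Q)$ and set $\widetilde T(u)=VT(u)$. We must check that $V\otimes V$ commutes with the trigonometric $r$ and $R$ of \eqref{trigclassr} and \eqref{trigR}. This holds because both matrices are weight-preserving: their only nonzero entries lie in the blocks $E_{11}\otimes E_{11}$, $E_{22}\otimes E_{22}$ and the $\{E_{12}\otimes E_{21},E_{21}\otimes E_{12},\dots\}$ middle block, on which $V\otimes V$ acts as the scalars $1$, $Q^2$ and $Q$. Hence $\widetilde T$ satisfies the same relation.

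\textbf{Step 3: commutativity.} Taking $\Tr_{12}$ of the classical relation gives $\{t(u_1),t(u_2)\}=\Tr_{12}[r,\cdot]=0$. In the quantum case we need $R_{12}(u_1/u_2)$ to be invertible for generic $u_1/u_2$. Its determinant is $(q-u)^2\big(q(1-u)^2-u(q-1)^2\big)=(q-u)^2(1-qu)(q-u)$, which is nonzero generically. Cyclicity of the trace then gives $\widehat t(u_1)\widehat t(u_2)=\widehat t(u_2)\widehat t(u_1)$. Expanding in $u$ yields $\{H_i,H_j\}=0$ and $[\widehat H_i,\widehat H_j]=0$, for $Q=0$ (open) and general $Q$ (closed). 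Formulas \eqref{relatcalopenH1}, \eqref{relatopenquanH1}, \eqref{relatcalclosH1} and \eqref{relatquanclosH1} are then identified as coefficients of $t$ by direct expansion of the product; this is bookkeeping that we would only spot-check.

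\textbf{Step 4: independence (the main obstacle).} Commutativity is formal; the real content is independence of $n$ of the $H_i$ (functional independence classically, algebraic independence of symbols quantum mechanically). We argue by a deformation/degeneration argument. First consider the leading terms in the momenta. Each $L_i$ has diagonal part $\mathrm{diag}(u-p_i,\,c+aup_i)$ and off-diagonal part $x_i^{\pm1}$ times momentum functions. Setting $a=c=0$, $d=1$ recovers the relativistic Toda Lax matrix, whose Hamiltonians are known to be independent; they have elementary symmetric functions of $p_1,\dots,p_n$ as leading parts. The locus of dependence, namely the vanishing of $dH_1\wedge\cdots\wedge dH_n$ on a dense open set, is given by polynomial conditions in the parameters $a,c,d$. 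Since it fails at the Toda point, it fails for generic parameters. If uniformity in the parameters is wanted, we can instead grade by $x$-degree: the $x$-independent part of $H_k$ is a deformed elementary symmetric polynomial, and we would check its Jacobian in the $p$'s directly. The quantum case follows from the classical one by passing to symbols. The delicate point is to ensure that the degeneration truly yields $n$ independent functions in the closed case as well, where $H_0$ or $H_n$ may become trivial constants; we would check that the count of nontrivial coefficients is still $n$.
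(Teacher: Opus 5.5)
Your proposal is correct and takes the same route as the paper, whose proof just invokes the (Poisson) commutativity coming from the trigonometric $r$/$R$-matrix structures and says that independence is inherited from the undeformed relativistic Toda systems; your Steps 1--3 and the degeneration to $a=c=0$, $d=1$ in Step 4 make this explicit, including the point that inherited independence is a priori only guaranteed for generic values of $a,c,d,Q$. If you do spot-check the explicit formulas, the bracket in \eqref{relatcalopenH1} should read $(c/p_i+d-ap_i)$, as in \eqref{relatcalclosH1}; this is already visible for $n=2$ from $T_{11}(u)=(u-p_2)(u-p_1)+u\,x_2^{-1}(c+dp_2-ap_2^2)\,x_1$.
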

\begin{proof}
    The commutativity of the Hamiltonians has already been proved in both classical and quantum cases. The independency follows from the independency in the non-deformed cases.
\end{proof}

Similarly to the non-relativistic case, one can construct the van Diejen versions of the classical and quantum relativistic Toda systems by using the reflection equation with respect to the trigonometric $R$-matrix \eqref{trigclassr} and \eqref{trigR}. In the deformed case, the quantum relativistic Toda chain, defined by the Lax operator \eqref{RelquantL}, can be solved via the algebraic Bethe ansatz method, since for generic deformation parameters there exists a pseudo-vacuum state. 

%%%%%%%%%%%%%%%%%%%%%%%%%%%%
\newpage
\printbibliography
\end{document}